\newtheorem{theorem}{Theorem}
\newtheorem{conjecture}[theorem]{Conjecture}
\newtheorem{lemma}[theorem]{Lemma}
\newtheorem{proposition}[theorem]{Proposition}
\newtheorem{corollary}[theorem]{Corollary}
\newtheorem{definition}[theorem]{Definition}
\theoremstyle{break}
\newtheorem{example}[theorem]{Example}
\newenvironment{proof}{\paragraph{Proof.}}{\hfill$\square$}
\title{Optimal Pure Strategies for a Discrete Search Game}
\date{}
\author{Thuy Bui\thanks{Rutgers Business School, 1 Washington Park, Newark, NJ 07102, USA, tb680@business.rutgers.edu} \and Thomas Lidbetter\thanks{Rutgers Business School, 1 Washington Park, Newark, NJ 07102, USA, tlidbetter@business.rutgers.edu}\thanks{Department of Systems and Information Engineering, University of Virginia, VA 22903, USA, tlidbetter@virginia.edu} 
\and Kyle Y. Lin \thanks{Operations Research Department, Naval Postgraduate School, Monterey, CA 93943}}
\providecommand{\keywords}[1]{\textbf{\textbf{Keywords:}} #1}
\begin{document}
	
\maketitle

\begin{abstract}
\noindent
Consider a two-person zero-sum search game between a Hider and a Searcher.
The Hider chooses to hide in one of $n$ discrete locations (or ``boxes'') and the Searcher chooses a search sequence specifying which order to look in these boxes until finding the Hider.
A search at box $i$ takes $t_i$ time units and finds the Hider---if hidden there---independently with probability $q_i$, for $i=1,\ldots,n$.
The Searcher wants to minimize the expected total time needed to find the Hider, while the Hider wants to maximize it.
It is shown in the literature that the Searcher has an optimal search strategy that mixes up to $n$ distinct search sequences with appropriate probabilities.
This paper investigates the existence of optimal pure strategies for the Searcher---a single deterministic search sequence that achieves the optimal expected total search time regardless of where the Hider hides.
We identify several cases in which the Searcher has an optimal pure strategy, and several cases in which such optimal pure strategy does not exist.
An optimal pure search strategy has significant practical value because the Searcher does not need to randomize their actions and will avoid second guessing themselves if the chosen search sequence from an optimal mixed strategy does not turn out well.
\end{abstract}

\keywords{search games, zero-sum games, semi-infinite games, Gittins index}



\newpage
\section{Introduction}
A major criticism of game theory is the notion of randomized (or {\em mixed}) strategies: see, for example, \cite{binmore} and \cite{gintis}. In a Nash equilibrium, players randomize between strategies in a precisely specified way, yet they are indifferent between each of the strategies between which they randomize, given the behavior of the other players. Intuitively, it is difficult to explain such precision in the players' random actions. \cite{harsanyi} had an elegant solution to this criticism, by defining a Bayesian game where the payoffs of a given game are slightly perturbed by a random amount. Any equilibrium in the original game can ``almost always'' be obtained as a limit of a pure strategy equilibria in the Bayesian game, as the perturbations go to zero.

This is all very well, but in practice, if a Nash equilibrium requires a player to randomize, there are also issues about the practical implementation of such randomized strategies. This is particularly salient in search games, which can be used to model military operations where lives may be at stake and randomized actions can be hard to justify to decision makers.  If there is a choice between a mixed strategy solution and a pure strategy (non-randomized) solution, we argue that the latter would be preferable, since it is more easily implementable and explainable. Decision makers do not need to rely on the roll of a die or outcome of a roulette wheel to make their choices.

In this paper we consider the problem of finding optimal pure strategies for a generalization of a classical search game introduced by \cite{bram}, where a Hider chooses one of a finite number $n$ of locations (or ``boxes'') in which to hide and a Searcher inspects the boxes one-by-one until finding the Hider. When a given box is searched, if the Hider is in that box, there is a specified probability (the ``detection probability'') that the Hider is found. The Searcher aims to minimize the expected number of boxes to search before finding the Hider, whereas the Hider aims to maximize it. The game is zero-sum, and \cite{bram} proved that an optimal (min-max) mixed search strategy exists and suggested a numerical way to compute it. There has been sporadic interest in the game, but not much is known in terms of closed form solutions. Indeed, it seems unlikely that such solutions exist in general. \cite{RG78} found an optimal Hider strategy in the case of two boxes under certain, rather specific, conditions and \cite{Ruckle} found an optimal search strategy for the case that the detection probabilities are the same for all boxes.

There has been some recent interest in a more general version of the game described above, where each box takes a certain amount of time to search. The payoff becomes the total expected time to find the Hider. 
\cite{CLG22} showed that optimal strategies exist and that there exists an optimal Searcher strategy that mixes between at most $n$ pure strategies.
\cite{CL23} presented an algorithm to numerically calculate optimal strategies for the players.

The question of whether there exist optimal {\em pure} strategies in this game has received little attention. \cite{Ruckle} pointed out that for the case of equal search times and detection probabilities all equal to $1/2$, there is an optimal pure strategy, but prior to this paper, there were no other known cases for which there is an optimal pure strategy. We rectify this by showing that there are optimal pure strategies for several classes of the game, and we show how to calculate them. We also give conditions under which there is no optimal pure strategy. Roughly speaking, this is the case when the detection probabilities are high. In this case, there is a high probability of discovery after the first box has been searched, which immediately skews the expected search time (in the negative direction) towards this box.

This paper is laid out as follows. In Section~\ref{sec:prelim}, we recall the definition of the game and some previous results. Although this paper is primarily on the topic of optimal pure strategies, we also give a full solution to the game in mixed strategies for the case of equal detection probabilities in Section~\ref{sec:equal}. This case was previously unsolved. In Section~\ref{sec:opt-pure}, we use a series of lemmas to prove our main result, Theorem~\ref{lem:m_seq_diff_prob}, which gives conditions under which a pure strategy solution to the game exists. We then apply Theorem~\ref{lem:m_seq_diff_prob} to several classes of the game. We end Section~\ref{sec:opt-pure} by proving conditions under which no pure strategy solution exists, and we explore the consequences of this result. In Section~\ref{sec:two}, we restrict attention to the case of two boxes, taking advantage of the simpler structure to strengthen our results about the existence of pure strategy solutions. We go on to consider the question of what is a ``best'' pure strategy in the case that no optimal pure strategy exists.  In Section~\ref{sec:conclusion} we conclude.

\section{Preliminaries} \label{sec:prelim}

In this section, we recall the definition of the game $G$ and state some previous results that we will use later on in the paper.

\subsection{Definition of the Search Game} \label{sec:def}

The game $G$ is played between a Hider (the maximizer) and a Searcher (the minimizer). The Hider's pure strategies are a set of $n$ boxes, denoted $[n]\equiv \{1,\ldots,n\}$, in which he may hide. Hence, the set of mixed strategies for the Hider is 
\[
\Delta^n \equiv \{(p_1,\ldots,p_n): p_i \ge 0 \text{ for } i=1,\ldots,n \text{ and } \sum_{i=1}^n p_i = 1 \}.
\]
Here, $p_i$ is the probability the Hider chooses to hide in box $i$.

A pure strategy for the Searcher is a search sequence---an infinite sequence of the boxes corresponding to the order in which the Searcher searches them.
Thus, the Searcher's pure strategy set is the set $\mathcal{C} \equiv [n]^\infty$, an infinite set.
As in \cite{CLG22}, we define a mixed strategy for the Searcher as a probabilistic choice of a {\em countable} subset of pure strategies. More precisely, it is a function $\theta:\mathcal{C} \rightarrow [0,1]$ such that $\{\xi \in \mathcal{C}: \theta(\xi)>0 \}$ is countable and 
\[
\sum_{\xi \in \mathcal{C}} \theta(\xi) = 1.
\]
The condition that $\{\xi \in \mathcal{C}: \theta(\xi)>0 \}$ is countable ensures that this sum is well-defined.

Each box $i$ has a {\em search time} $t_i >0$, which is the time required to complete one search of the box. Box $i$ also has a {\em detection probability} $q_i \in (0,1]$, which is the probability the Searcher will find the Hider after one search of box $i$ if the Hider is hidden there.
Write $r_i = 1-q_i$ for the overlook probability of box $i$.
The outcome of each search is independent of all previous search outcomes.
For each pure strategy pair $(i, \xi)$---when the Hider hides in box $i$ and the Searcher uses the search sequence $\xi$--- we write $u(i,\xi)$ for the payoff of the game, which is the expected time needed for the Searcher to find the Hider under the pure strategy pair $(i, \xi)$. 
We refer to the payoff as the {\em expected search time}. When the players use mixed strategies $\mathbf p$ and $\theta$, we extend $u$ to denote the expected payoff as follows.
\[
u(\mathbf p, \theta) \equiv \sum_{i=1}^n \sum_{\xi \in \mathcal{C}} p_i \theta(\xi) u(i,\xi).
\]
We similarly extend the definition of $u$ for cases when one player uses a mixed strategy and the other uses a pure strategy.

For a given Hider mixed strategy $\mathbf{p}$, the problem of finding a best response---that is, a Searcher strategy $\xi$ that minimizes the expected search time $\sum_{i=1}^n p_i u(i, \xi)$ against $\mathbf p$---is well understood. The solution, first discovered by Blackwell \cite[reported in][]{Matula}, can be found recursively. The Searcher should first search any box $i$ that maximizes the probability of detection per unit time, namely $p_i q_i/t_i$. After each subsequent search, the next box to be searched is found by updating the hiding probabilities $p_i$ according to Bayes' law and repeating the calculation. It is straightforward to show that after box $i$ has been searched $m_i$ times for $i=1,\ldots,n$, the next box to be searched should be some $i$ that maximizes the index 
\begin{align}
\psi_i \equiv \frac{p_i q_i r_i^{m_i}}{t_i}. 
\label{eq:index}
\end{align}
In order to study optimal pure strategies, in this paper, we assume there are some positive coprime integers $k_1,\ldots,k_n$ and some $r >0$ such that $r_i^{k_i}=r$ for all $i\in[n]$.
It is worth noting that this assumption is not at all restrictive because if it were not the case, we could find $k_1,\ldots,k_n$ and $r$ such that $r_i^{k_i}$ is arbitrarily close to $r$ for each $i$.
With $r_i^{k_i}=r$ for all $i\in[n]$, the index in \eqref{eq:index} becomes
\begin{align}\label{eq:psi}
    \psi_i \equiv \frac{p_i q_i r^{m_i/k_i}}{t_i}.
\end{align}

Following the terminology of \cite{CLG22}, we refer to any sequence that can be produced in this way (and is hence a best response to some Hider strategy $\mathbf p$) as a {\em Gittins search sequence} (against $\mathbf p$). The name comes from a comment by Kelly in \cite{Gittins79} noting that Blackwell?s solution is equivalent to a Gittins index policy  obtained by modeling the search as a tractable version of the multiarmed bandit problem \cite[see][]{Gittins11}.

There may be multiple Gittins search sequences against a given $\mathbf p$ in the case that there are ties between the indices during the search process.  Any permutation $\sigma$ of $[n]$ can be used to define a tie-breaking rule, where each tie is broken by choosing the box  that appears first in $\sigma$. We call a Gittins search sequence {\em consistent} if all ties are broken using the same tie-breaking rule given by some permutation $\sigma$ of $[n]$. There are at most $n!$ consistent Gittins search sequences against any given $\mathbf p$, and two different permutations may result in the same consistent Gittins search sequence.

A particular Hider strategy of interest is the {\em equalizing strategy}, defined as the strategy that makes the Searcher indifferent between all the boxes at the beginning of the search.
We define the Hider's equalizing strategy below.
\begin{definition}
The Hider's equalizing strategy $\mathbf{p}^*$ is given by
\begin{equation*}
p^*_i = \frac{t_i/q_i}{\sum_{j=1}^n t_j/q_j}, \qquad i\in[n].
\label{eq:equalizing}
\end{equation*}
\end{definition}

\bigskip

The Hider's equalizing strategy $\mathbf{p}^*$ appears to be a strong strategy because it makes all boxes equally attractive (or unattractive) to the Searcher at the beginning of the search.
In particular, there are $n!$ distinct consistent Gittins search sequences against $\mathbf{p}^*$. 
\cite{RG78} showed that the Hider's equalizing strategy is indeed optimal in certain cases.
We recite the theorem below, which will be used later.
\begin{theorem}[Theorem 2 of \cite{RG78}] \label{thm:RG78}
Consider the game $G$ with $n=2$ and $t_1=t_2$.
If $r_1^k=r_2^{k+1}$ for some positive integer $k$, where $r_i \equiv 1-q_i$ for $i=1,2$, then the Hider's equalizing strategy $\mathbf{p}^*$ is optimal if and only if $k \le 12$.
\end{theorem}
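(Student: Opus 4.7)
The plan is to reduce the optimality of $\mathbf{p}^*$ to the solvability of a single scalar equation over a one-parameter family of Searcher mixed strategies, and then show that the resulting condition depends only on $k$ and has the asserted threshold. Set $t_1 = t_2 = 1$ without loss of generality, and write $r_1 = r^{1/k}$, $r_2 = r^{1/(k+1)}$ using the hypothesis $r_1^k = r_2^{k+1} = r$. Since $\mathbf{p}^*$ equalizes the initial Gittins indices, the index of box 1 after $m_1$ searches is proportional to $r^{m_1/k}$ and that of box 2 after $m_2$ searches to $r^{m_2/(k+1)}$. Starting from a state where both exponents equal an integer (a \emph{tie}), the next $2k+1$ searches comprise exactly $k$ of box 1 and $k+1$ of box 2, interleaved in a way uniquely determined by the choice $b_\ell \in \{1,2\}$ of which box to search at the tie, after which a new tie occurs. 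Thus, apart from these tie-breaking choices at the start of each cycle $\ell$, every Gittins search sequence against $\mathbf{p}^*$ is deterministic, and the set of such sequences is parametrized by the infinite binary word $(b_\ell)_{\ell \ge 1}$.

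Next, I reduce optimality to a single equation. Write $T_i(\xi)$ for the expected search time when the Hider is at box $i$ and the Searcher follows $\xi$. Because every Gittins sequence is a Searcher best response to $\mathbf{p}^*$, the payoff $V := u(\mathbf{p}^*, \xi) = p_1^* T_1(\xi) + p_2^* T_2(\xi)$ is independent of which Gittins $\xi$ we pick, and the same invariance extends by linearity to any mixture $\theta$ of Gittins sequences. A standard minimax argument then shows that $\mathbf{p}^*$ is optimal if and only if some such mixture satisfies $u(1,\theta) = u(2,\theta) = V$. By linearity of $u(i,\cdot)$ only the marginals $\alpha_\ell := \Pr(b_\ell = 1)$ matter, and grouping the sum $T_i(\xi) = \sum_{j \ge 1} \tau_j^{(i)} q_i r_i^{j-1}$ (where $\tau_j^{(i)}$ is the time of the $j$-th search of box $i$) by cycle yields a factor $r^{\ell-1}$ from cycle $\ell$, so the full dependence on $(\alpha_\ell)$ collapses to a single averaged parameter $\bar\alpha := (1-r)\sum_{\ell \ge 1} r^{\ell-1}\alpha_\ell \in [0,1]$. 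Each $T_i$ is affine in $\bar\alpha$, so the difference $D(\bar\alpha) := T_1(\bar\alpha) - T_2(\bar\alpha)$ is affine too, and a feasible $\bar\alpha \in [0,1]$ exists if and only if $D(0)D(1) \le 0$.

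The remaining step is to compute $D(0)$ and $D(1)$ in closed form. When $\bar\alpha \in \{0,1\}$ the within-cycle interleaving is fully determined, so the positions $\tau_j^{(i)}$ are explicit functions of $j$ and $k$, and summing two geometric series (inside each cycle and then over cycles) yields a closed-form expression for $T_i$ in $r$ and $k$. After clearing a positive common factor, the sign condition $D(0)D(1) \le 0$ simplifies, with the $r$-dependent prefactors cancelling, to an inequality in $k$ alone. The main obstacle is the final numerical check: one must verify that this inequality holds for every $k \le 12$ and fails for every $k \ge 13$. This amounts to a finite computation for small $k$ together with a monotonicity argument for large $k$, and it is where the clean combinatorial reduction must be supplemented by the direct calculation that pins the threshold at exactly $k=12$.
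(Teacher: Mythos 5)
This statement is quoted from Roberts and Gittins (1978); the paper itself gives no proof of it, so there is no internal argument to compare against. That said, your reduction is essentially the machinery the paper builds in Section~\ref{sec:opt-pure}: since $\gcd(k,k+1)=1$, ties between the indices $r^{m_1/k}$ and $r^{m_2/(k+1)}$ occur only at cycle boundaries, so every Gittins sequence against $\mathbf{p}^*$ is a concatenation of the two blocks $s_1=1,2,2,1,2,\ldots$ and $s_2=2,1,2,1,\ldots$; by Lemma~\ref{lem:payoff} the payoff of any mixture depends on the tie-breaking choices only through the geometric average $\bar\alpha$, and the minimax equivalence (``$\mathbf{p}^*$ is optimal iff an equalizing mixture of Gittins sequences exists'') is correct, with Theorem~\ref{thm:CLG} supplying the necessity direction. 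Up to that point your argument is sound and reproduces the paper's own formula $\lambda_1=\bigl(w_1(s_2)-w_2(s_2)\bigr)/(q_1+q_2)$ from Subsection~\ref{sec:non-existence}, with optimality of $\mathbf{p}^*$ equivalent to $\lambda_1\in[0,1]$.

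The gap is at the decisive step. Your claim that after computing $D(0)$ and $D(1)$ ``the $r$-dependent prefactors cancel'' so that the feasibility condition ``simplifies to an inequality in $k$ alone'' is unjustified and is in fact false: already for $k=1$ a direct computation gives $\lambda_1=(1-r_2)/(2+r_2)$, which manifestly depends on $r$. So the condition $\lambda_1\in[0,1]$ must be analyzed as the free parameter $r$ ranges over $(0,1)$, and this quantification is exactly where the threshold $12$ has to come from; worse, at the extremes one finds $\lambda_1\to 1/2$ as $r\to 0$ and $\lambda_1\to 0$ as $r\to 1$ for every $k$, so any failure for $k\ge 13$ can only occur at particular interior parameter values, and your sketch provides no mechanism for locating them (a naive check such as $k=13$, $r=1/2$ still yields $\lambda_1\in(0,1)$). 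Finally, the verification that the condition holds for all $k\le 12$ and fails when $k\ge 13$---which is the entire content of the theorem---is explicitly deferred to ``a finite computation together with a monotonicity argument.'' As written, the proposal establishes a correct characterization of when $\mathbf{p}^*$ is optimal, but none of the theorem's actual conclusion.
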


If the Hider uses the equalizing strategy, then the indices $p^*_iq_i/t_i$ are all equal to some $\lambda$, and the index $\psi_i$ from~(\ref{eq:psi}) reduces to $\psi_i=\lambda r^{m_i/k_i}$. In this case, it is easy to see that any Gittins search sequence against $\mathbf p^*$ will be of the form $s_1,s_2,\ldots$, where each $s_j, j=1,2,\ldots$ is a subsequence containing box $i$ exactly $k_i$ times for each $i \in [n]$. Any {\em consistent} Gittins search sequence against $\mathbf p^*$ will be an infinite repetition of one such subsequence $s_j$. We denote such a strategy $(s_j)$. The length of time each subsequence $s_j$ spends searching is equal to $\sum_{i \in [n]} k_i t_i$, and we denote this quantity by $T$.

\subsection{The Value of the Game and Optimal Strategies}

As discussed in \cite{CLG22}, it follows from standard results on semi-infinite games that the game $G$ has a value.  That is, there exists some $V$ such that
\[
V = \max_{\mathbf{p}\in \Delta^n} \inf_{\xi \in \mathcal{C}}  u(\mathbf p, \xi) = \inf_{\theta} \max_{i \in [n]} u(i,\theta),
\]
where the infinum on the right-hand side is taken over all Searcher mixed strategies $\theta$. 

It also follows from standard results that there exists an optimal Hider strategy in $G$ (that is, a mixed strategy for the Hider that guarantee a payoff of at least $V$) and an $\varepsilon$-optimal Searcher strategies (that is, mixed strategies for the Searcher that guarantee a payoff of at most $V+\varepsilon$, for any $\varepsilon>0$).

A major result from \cite{CLG22}, which we will use later in this paper, is a stronger statement on the existence of optimal Searcher strategies. 

\begin{theorem}[Theorem 3 of \cite{CLG22}] \label{thm:CLG}
There exists an optimal mixed strategy for the Searcher in $G$ that mixes between at most $n$ pure strategies that are all consistent Gittins search sequences against the optimal Hider strategy.
\end{theorem}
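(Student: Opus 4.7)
The plan is to reduce the semi-infinite game $G$ to a finite matrix game on $[n]\times\mathcal G$, where $\mathcal G$ denotes the (finite) set of consistent Gittins search sequences against a fixed optimal Hider strategy $\bar{\mathbf p}$, and then apply the minimax theorem together with the classical LP support-size bound.

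\textbf{Step 1 (setup).} Invoke the existence result quoted in the excerpt to fix an optimal Hider strategy $\bar{\mathbf p}\in\Delta^n$, so $\inf_{\xi\in\mathcal C}u(\bar{\mathbf p},\xi)=V$. Because Blackwell's recursive index rule is a best response against any fixed Hider strategy, every consistent Gittins search sequence $\xi$ against $\bar{\mathbf p}$ attains $u(\bar{\mathbf p},\xi)=V$. In particular $\mathcal G$ is finite, with $|\mathcal G|\le n!$.

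\textbf{Step 2 (value of the restricted game).} Form the finite zero-sum matrix game $G'$ on $[n]\times\mathcal G$, and let $V_r$ denote its value. Playing $\bar{\mathbf p}$ in $G'$ guarantees payoff exactly $V$ against every column of $\mathcal G$, so $V_r\ge V$. The reverse inequality $V_r\le V$ is the heart of the proof: it requires exhibiting a Searcher mixed strategy supported on $\mathcal G$ that holds the payoff at most $V$ against \emph{every} Hider pure strategy. The intended route is to (a) take an $\varepsilon$-optimal Searcher strategy $\theta_\varepsilon$ in $G$ and observe that since $u(\bar{\mathbf p},\theta_\varepsilon)\in[V,V+\varepsilon]$ while every $u(\bar{\mathbf p},\xi)\ge V$, the mass of $\theta_\varepsilon$ must concentrate on (near-)best responses to $\bar{\mathbf p}$, i.e.\ Gittins sequences; and then (b) replace each non-consistent Gittins sequence in the support by a convex combination of consistent ones sharing the same payoff vector, via an inductive decomposition over tie-breaking choices. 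Passing $\varepsilon\to 0$, together with a closure argument for the image of the Searcher's payoff vectors in $\mathbb R^n$, then yields $V_r\le V$.

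\textbf{Step 3 (support bound).} Once $V_r=V$ is established, apply the classical LP support-size bound to $G'$: the Searcher's LP has $n$ payoff-inequality constraints plus the normalization $\sum_\xi\theta(\xi)=1$, so a basic optimal solution has at most $n$ positive $\theta$-components. The resulting $\theta^*\in\Delta(\mathcal G)$ satisfies $\max_{i}u(i,\theta^*)=V_r=V$, so it is optimal in $G$ and is supported on at most $n$ consistent Gittins search sequences against $\bar{\mathbf p}$.

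\textbf{Main obstacle.} The crux is Step~2, establishing $V_r\le V$. The difficulty is twofold: one must show both that every (near-)optimal Searcher strategy in $G$ can be pushed onto Gittins sequences against $\bar{\mathbf p}$, and that those Gittins sequences can in turn be re-expressed in terms of the \emph{consistent} (globally tie-broken) ones. Making the second step precise---controlling how different global tie-breaking conventions interact along a single search trajectory, and verifying that the decomposition preserves the whole payoff vector $(u(i,\cdot))_i$ rather than merely the coordinate indexed by $\bar{\mathbf p}$---is where the substantive combinatorial work of the proof lies.
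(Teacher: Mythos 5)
This statement is quoted in the paper as Theorem~3 of Clarkson, Lin and Glazebrook (2023); the present paper gives no proof of it, so there is no internal argument to compare yours against. Judged on its own terms, your proposal has a genuine gap, and it is exactly the step you flag yourself. Steps~1 and~3 are routine: that every consistent Gittins sequence against an optimal $\bar{\mathbf p}$ attains $u(\bar{\mathbf p},\xi)=V$ follows from Blackwell's best-response characterization, and the bound of $n$ on the support size is standard linear-programming theory once you have an optimal strategy supported on the finite set $\mathcal G$. But the entire content of the theorem is your Step~2, the inequality $V_r\le V$ --- i.e.\ that some mixture of \emph{consistent} Gittins sequences against $\bar{\mathbf p}$ holds the payoff to at most $V$ against every box simultaneously --- and for this you offer only an ``intended route,'' not an argument.

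Both halves of that route are substantive and unproved. For (a), knowing that $u(\bar{\mathbf p},\theta_\varepsilon)\le V+\varepsilon$ only controls the $\bar{\mathbf p}$-weighted average of the payoff vector; it tells you the mass concentrates on sequences that are near-optimal \emph{against $\bar{\mathbf p}$}, and such sequences need not be Gittins sequences (they may deviate from the index rule arbitrarily far into the future at negligible cost against $\bar{\mathbf p}$ while still distorting individual coordinates $u(i,\cdot)$). Passing to a limit also needs care, since the space of payoff vectors over all of $\mathcal C$ is unbounded and you have not exhibited the compact set on which your ``closure argument'' operates. For (b), the claim that every non-consistent Gittins sequence has a payoff vector equal to a convex combination of payoff vectors of consistent ones is asserted, not proved; the only decomposition of this kind available in the present paper (Lemma~\ref{lem:equiv}) works because against the equalizing strategy $\mathbf p^*$ the Gittins sequences have a clean periodic cycle structure, and a general optimal $\bar{\mathbf p}$ need not be $\mathbf p^*$, so ties can occur at irregular times and the inductive decomposition you invoke does not come for free. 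In short, the skeleton is reasonable, but the theorem's actual mathematical content is concentrated in the part you leave as an obstacle, so the proposal does not constitute a proof.
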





\section{The Case of Equal Detection Probabilities} \label{sec:equal}
In this section, we consider the game with $q_i=q$ for all $i \in [n]$, for some $q\in (0,1)$, but arbitrary search times $t_1, \ldots, t_n$.
In the special case where $t_1=\cdots=t_n$, \cite{Ruckle} found that the Hider's equalizing strategy $\mathbf{p}^* = (1/n, 1/n, \ldots, 1/n)$ is optimal.
For the Searcher, it is optimal to choose with equal probability each of the search sequences $(i,i+1,\ldots,n,1,2,\ldots,i-1)$, for $i=1,\ldots,n$.

In another special case with $q=1$, each box need only be searched once to ensure that the Hider is found with probability $1$. In this case, the Hider's equalizing strategy $\mathbf{p}^*$ chooses each box $i$ with probability proportional to $t_i$, and any permutation of $[n]$ is a Gittins search sequence against $\mathbf{p}^*$. The parameter $T$, defined at the end of Subsection~\ref{sec:def} reduces to $T=\sum_{i \in [n]} t_i$.
This special case was first considered in \cite{Condon09}, and independently in \cite{AL13} and \cite{Lidbetter13}. The optimal strategies found for the Searcher were different in each of these works, but the optimal Hider strategy is unique \cite[see][]{AL13}. We summarize the solution below, giving the optimal Searcher strategy from \cite{Condon09}, which is arguably the most elegant of the three. 

\begin{theorem} \label{thm:q=1}
In the game $G$ with $q_i=1$ for all $i \in [n]$, it is optimal for the Hider to hide in box $i$ with probability proportional to $t_i$. It is optimal for the Searcher to choose the first box $i$ she looks in with probability proportional to $t_i$, then to search the remaining boxes in the order $i+1,i+2,\ldots,n,1,2,\ldots,i-1$. The value $V$ of the game is given by
\[
V =  T - \frac{1}{T}\sum_{1\le i < j \le n} t_i t_j.
\]
\end{theorem}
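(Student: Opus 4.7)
The plan is to verify the claimed value $V = T - \frac{1}{T} \sum_{i<j} t_i t_j$ by exhibiting the two given strategies as a saddle point. Since $q_i = 1$ makes $r_i = 0$, no consistent Gittins search sequence visits a box more than once, so the Searcher's relevant pure strategies may be identified with the $n!$ permutations $\pi=(\pi_1,\ldots,\pi_n)$ of $[n]$, with payoff $\sum_{j=1}^k t_{\pi_j}$ when the Hider is in box $\pi_k$. The Searcher strategy $\theta$ of the theorem is then a mixture, with weights $t_i/T$, over the $n$ cyclic rotations of the identity permutation.

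First I would compute $u(\mathbf{p}^*, \pi)$ for the Hider's equalizing strategy $p_i^* = t_i/T$ against an arbitrary permutation $\pi$, obtaining
\[
u(\mathbf{p}^*, \pi) = \frac{1}{T} \sum_{k=1}^n t_{\pi_k} \sum_{j=1}^k t_{\pi_j} = \frac{1}{T} \left( \sum_{i=1}^n t_i^2 + \sum_{1 \le i < j \le n} t_i t_j \right),
\]
where the second equality follows because each unordered pair of distinct indices appears exactly once among the pairs $(\pi_j, \pi_k)$ with $j < k$. The identity $\sum_i t_i^2 = T^2 - 2\sum_{i<j} t_i t_j$ then reduces the bracket to $T^2 - \sum_{i<j} t_i t_j$, so $u(\mathbf{p}^*, \pi) = V$ for every $\pi$, and $\mathbf{p}^*$ guarantees at least $V$.

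Second I would verify that $u(m, \theta) = V$ for every Hider pure strategy $m \in [n]$. Rewriting
\[
u(m, \theta) \cdot T = \sum_{i=1}^n t_i \sum_{\ell \text{ on cyclic path } i \to m} t_\ell = \sum_{i, \ell \in [n]} t_i t_\ell \,\mathbf{1}[\rho(\ell) \le \rho(i)],
\]
where $\rho(x)$ denotes the forward cyclic distance from $x$ to $m$, I would note that $\rho$ is a bijection on $[n]$. The indicator therefore holds with equality only when $i = \ell$, and for each unordered pair of distinct boxes exactly one ordering satisfies the strict inequality. This collapses the double sum to $\sum_i t_i^2 + \sum_{i<j} t_i t_j = V \cdot T$, independent of $m$; hence $\theta$ guarantees at most $V$.

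The two bounds together identify $V$ as the value of $G$ and both strategies as optimal. The only substantive step beyond routine algebra is the combinatorial reindexing via $\rho$ that makes $u(m, \theta)$ independent of $m$; no serious obstacle is anticipated.
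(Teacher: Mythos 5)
Your proof is correct. The paper itself does not prove Theorem~\ref{thm:q=1}; it imports the result from \cite{Condon09} (also \cite{AL13}, \cite{Lidbetter13}), so there is no internal argument to compare against. Your direct saddle-point verification is self-contained and both computations check out: the sum $\sum_{k}t_{\pi_k}\sum_{j\le k}t_{\pi_j}$ collapses to $\sum_i t_i^2+\sum_{i<j}t_it_j=T^2-\sum_{i<j}t_it_j$ for every permutation $\pi$, and the reindexing by the forward cyclic distance $\rho$ correctly shows $T\,u(m,\theta)$ equals the same quantity for every $m$, since $\rho$ is injective and hence exactly one ordering of each unordered pair of distinct boxes satisfies the strict inequality. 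The only step you gloss over slightly is the reduction of the Searcher's best responses to permutations: this is justified because every $p_i^*>0$ forces any best response to eventually search each box, and with $q_i=1$ repeating a box before all are exhausted only delays the first visits, so the infimum over all sequences is attained on permutations.
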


Our contribution in this paper is to extend Theorem~\ref{thm:q=1} to the case in which $q < 1$ in the next theorem.

\begin{theorem} \label{thm:q-equal}
Denote by $\theta(\sigma)$ the probability of choosing permutation $\sigma$ of $[n]$ for some arbitrary optimal Searcher strategy for the game $G$ with $q=1$.
For the game with equal detection probabilities $q \in (0,1)$, the Searcher strategy that chooses each search sequence $(\sigma)$---which repeats $\sigma$ indefinitely---with probability $\theta(\sigma)$ is optimal for the Searcher. The Hider's equalizing strategy $\mathbf{p}^*$---choosing box $i$ with probability $p^*_i=t_i/T$---is optimal.  The value of the game is 
\[
V = \frac{T}{q} - \frac{1}{T}\sum_{1\le i < j \le n } t_i t_j.
\]
\end{theorem}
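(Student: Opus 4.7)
The plan is to reduce the $q<1$ game to the $q=1$ case already solved in Theorem~\ref{thm:q=1} via a simple additive identity. Since $q_i=q$ for all $i$, we may take $k_i=1$, $r=1-q$, and $T=\sum_i t_i$. The first step is to identify the Gittins search sequences against $\mathbf{p}^*$. At the start every index equals $p^*_i q/t_i = q/T$; after any box is searched its index drops by a factor of $r$ while the other indices are unchanged, so the next search must pick a box not yet searched in the current ``round.'' Iterating, every Gittins sequence against $\mathbf{p}^*$ is a concatenation $\pi_1\pi_2\cdots$ of permutations of $[n]$, and the consistent ones are exactly the cyclic repetitions $(\sigma)$.

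The central computation is this. For a permutation $\sigma$, let $a(i,\sigma)$ be the time at which the first search of box $i$ completes within one pass of $\sigma$, so that the $q=1$ game payoff when the Hider picks $i$ and the Searcher uses $\sigma$ is exactly $a(i,\sigma)$. Under the cyclic strategy $(\sigma)$ in our game, box $i$ is searched at times $a(i,\sigma)+(m-1)T$, with the $m$-th such search succeeding with probability $qr^{m-1}$. Summing the geometric series,
\[
u(i,(\sigma)) \;=\; \sum_{m=1}^\infty qr^{m-1}\bigl(a(i,\sigma)+(m-1)T\bigr) \;=\; a(i,\sigma)+\frac{T(1-q)}{q}.
\]
Thus cyclic play in our game differs from one pass in the $q=1$ game by the additive constant $T(1-q)/q$.

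Both halves of the theorem now follow. Write $V'\equiv T-\frac{1}{T}\sum_{i<j}t_it_j$ for the $q=1$ value given by Theorem~\ref{thm:q=1}. Averaging the identity over $\sigma$ with weights $\theta(\sigma)$ yields $u(i,\theta^*)\le V'+T(1-q)/q=V$, because $\theta$ is optimal in the $q=1$ game; so $\theta^*$ guarantees at most $V$ against every Hider pure strategy. Conversely, $\mathbf{p}^*$ equalizes every permutation in the $q=1$ game, so $\sum_i p^*_i a(i,\sigma)=V'$ for every $\sigma$; applied cycle-by-cycle to a general Gittins concatenation $\pi_1\pi_2\cdots$, the same geometric summation gives $u(\mathbf{p}^*,\xi)=V$ for every Gittins sequence $\xi$ against $\mathbf{p}^*$, and hence $\mathbf{p}^*$ guarantees at least $V$. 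The only point requiring care is this last cycle-by-cycle argument for arbitrary (possibly inconsistent) Gittins sequences; but since $\sum_i p^*_i a(i,\pi)=V'$ holds for \emph{every} permutation $\pi$, the telescoping is insensitive to how ties are broken from one round to the next.
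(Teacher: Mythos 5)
Your proof is correct and follows essentially the same route as the paper: both decompose the search into rounds consisting of permutations of $[n]$, use the geometric distribution of the round in which the Hider is found, and invoke Theorem~\ref{thm:q=1} to evaluate the within-round contribution. Your additive identity $u(i,(\sigma))=a(i,\sigma)+T(1-q)/q$ is just a per-box restatement of the paper's conditional expectation $V_k=(k-1)T+V'$, and your handling of arbitrary (possibly inconsistent) Gittins concatenations for the lower bound matches the paper's treatment of an arbitrary best response $\sigma_1,\sigma_2,\ldots$.
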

\begin{proof}
First suppose the Hider uses the strategy $\mathbf p^*$, and let $\xi$ be an arbitrary Searcher pure strategy which is a best response to $\mathbf p^*$. It must be a Gittins search sequence against $\mathbf p^*$, so it must be of the form $\sigma_1,\sigma_2,\ldots$, where each $\sigma_j, j\ge 1$ is a permutation of $[n]$. For each $k=1,2,\ldots$, we refer to the {\em $k$th round of the search} as boxes number $(k-1)n+1,\ldots,kn$ to be searched. The probability the Hider is found in the $k$th round of search is $(1-q)^{k-1}q$. Also, by Theorem~\ref{thm:q=1}, the conditional expected search time given the Hider is found in the $k$th round of search is 
\[
V_k \equiv  (k-1)T + \left(T - \frac{1}{T}\sum_{1\le i < j \le n } t_i t_j \right) = kT -  \frac{1}{T}\sum_{1\le i < j \le n } t_i t_j 
\]
Hence, the expected search time of $\xi$ against $\mathbf p^*$ is
\begin{align*}
 u(\mathbf p^*, \xi)&= \sum_{k=1}^\infty (1-q)^{k-1}q V_k \\
& = \sum_{k=1}^\infty (1-q)^{k-1}q \left( kT  - \frac{1}{T}\sum_{1\le i < j \le n } t_i t_j \right)\\
&= \frac{T}{q} - \frac{1}{T}\sum_{1\le i < j \le n } t_i t_j.
\end{align*}
Thus, the value satisfies $V \ge T/q - (1/T) \sum_{1\le i < j \le n } t_i t_j$.

We now turn to the Searcher strategy given by $\theta$. For any fixed Hider strategy, the probability the Hider is found in the $k$th round of search is $(1-q)^{k-1}q$ and by Theorem~\ref{thm:q=1}, the conditional expected search time given the Hider is found in the $k$th round of search is $V_k$, so by the same calculation, the expected search time is $T/q - (1/T) \sum_{1\le i < j \le n } t_i t_j$. This provides an upper bound on $V$, and we have equality.
\end{proof}

Theorem~\ref{thm:q-equal} generalizes a result of \cite{Ruckle} that solves the special case of equal detection probabilities and unit search times. In this case, as \cite{Ruckle} found, it is optimal for the Searcher to choose with equal probability each of the Searcher strategies $(i,i+1,\ldots,n,1,2,\ldots,i-1)$, for $i=1,\ldots,n$. 
Our next result shows that, for this special case, the Searcher has an optimal strategy that mixes between only two pure strategies.

\begin{proposition} \label{prop:equal}
    In the game $G$ with $q_i=q$ for all $i \in [n]$ and $t_i=t$ for all $j\in [n]$, it is optimal for the Searcher to choose with equal probability each of the search sequences $(1,2,\ldots,n)$ and $(n,n-1,\ldots,1)$.
\end{proposition}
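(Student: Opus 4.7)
The plan is to exhibit the proposed Searcher mixture as an equalizing strategy: against every Hider pure strategy $i \in [n]$, the expected search time equals exactly the value $V$ computed in Theorem~\ref{thm:q-equal}. Combined with the fact that $V$ is the game value (with optimal Hider strategy $\mathbf{p}^*$ already established there), this immediately yields optimality of the Searcher mixture. In the current symmetric setting, $T = nt$ and $\sum_{i<j} t_i t_j = \binom{n}{2} t^2$, so Theorem~\ref{thm:q-equal} specializes to $V = nt/q - (n-1)t/2$.

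First, I would compute $u(i, \xi_+)$ and $u(i, \xi_-)$ where $\xi_+$ denotes the infinite repetition of $(1,2,\ldots,n)$ and $\xi_-$ denotes the infinite repetition of $(n,n-1,\ldots,1)$. Under $\xi_+$, the $k$-th inspection of box $i$ completes at time $(i + (k-1)n)t$, and the Hider (conditioned on being in box $i$) is first detected on that inspection with probability $r^{k-1} q$. A short geometric-sum computation using $\sum_{k\ge 1}(k-1)r^{k-1}q = r/q$ yields $u(i, \xi_+) = it + ntr/q$. By the reversal symmetry (box $i$ sits at position $n-i+1$ in $\xi_-$), $u(i, \xi_-) = (n-i+1)t + ntr/q$.

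Averaging these two expressions cancels the $i$-dependence:
\[
\frac{1}{2}\bigl[u(i, \xi_+) + u(i, \xi_-)\bigr] = \frac{(n+1)t}{2} + \frac{ntr}{q}.
\]
Substituting $r = 1 - q$ and simplifying gives $nt/q - (n-1)t/2$, which matches $V$ exactly. Hence the Searcher mixture attains payoff $V$ against every Hider pure strategy, so at most $V$ against every Hider mixed strategy, establishing optimality.

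There is no real obstacle beyond the geometric-sum calculation and a short algebraic check against the value formula; the only conceptual observation is that averaging a periodic sequence with its reversal equalizes the expected time at which each box is first visited to $(n+1)t/2$, which is exactly what is needed to neutralize the Hider's choice of $i$.
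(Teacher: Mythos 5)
Your proof is correct and takes essentially the same approach as the paper: both verify that the 50/50 mixture of the two reversed cyclic orders equalizes the expected search time across all boxes at exactly the value $V$ from Theorem~\ref{thm:q-equal}, the key point being that box $i$'s average first-visit position $\tfrac{1}{2}(i+(n-i+1))=\tfrac{n+1}{2}$ is independent of $i$. The only cosmetic difference is that the paper routes the calculation through the $q=1$ reduction built into Theorem~\ref{thm:q-equal}, whereas you carry out the geometric sum directly for general $q$; both amount to the same one-line verification.
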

\begin{proof}
By Theorem~\ref{thm:q-equal}, it is sufficient to show that in the case $q=1$, it is optimal to choose with equal probability each of the sequences $1,2,\ldots,n$ and $n,n-1,\ldots,1$. Indeed for any box $i$, this strategy $\theta$ guarantees an expected search time of 
\[
u(i,\theta) = \frac{1}{2}(it) + \frac{1}{2}(n-i+1)t = \left(\frac{n+1}{2}\right)t.
\]
But by Theorem~\ref{thm:q=1}, the value of the game in this case is
\[
V=nt-\frac{1}{nt}{n \choose 2}t^2 = \left(\frac{n+1}{2}\right)t.
\]
Hence, $\theta$ is optimal.
\end{proof}

\section{Optimal Pure Strategy Solutions} \label{sec:opt-pure}
Because the optimal Hider strategy must be a mixed strategy, the study of optimal pure Searcher strategies--- deterministic search sequences that produce the same optimal expected total search time regardless of where the Hider hides---has been largely overlooked in the literature.
In the special case with $q_i=1/2$ and $t_i=1$ for $i \in [n]$, \cite{Ruckle} shows that the search sequence $n,n-1,\ldots,1,(1,2,\ldots,n)$ is optimal.
\cite{Ruckle} also shows that if $n=2$ and $q_1=q_2 \ge 0.8$, then there does not exist an optimal search sequence.
In this section, we prove much more general conditions for the existence of optimal pure strategy in Subsection~\ref{sec:existence}, and also conditions for the non-existence of optimal pure strategy in Subsection~\ref{sec:non-existence}.
These new findings allow us to extend the results in \cite{Ruckle} substantially.



To begin, we say that two search strategies $\theta_1$ and $\theta_2$ are {\em equivalent} if their expected search time against each box $i \in [n]$ is the same. That is,
\[
u(i,\theta_1)=u(i,\theta_2)
\]
for each $i \in [n]$.

\subsection{Existence of Optimal Pure Strategies}
\label{sec:existence}
For a  search strategy $\xi$, let $T_i^\ell(\xi)$ denote the sum of the search times of the boxes opened under~$\xi$, up to and including the $\ell^{\text{th}}$ search of box $i$, for $\ell=1,2,\ldots$. For example, if $\xi=1,2,2,1,\ldots$, then $T_1^1(\xi) = t_1$ and $T_{1}^2(\xi)=2t_1+2t_2$. 
Recalling $r_i = 1 - q_i$ is the overlook probability,
the payoff function $u(i,\xi)$ can be written as
\begin{align}
u(i,\xi)= \sum_{\ell=1}^\infty r_i^{\ell-1}q_i T_i^\ell(\xi). \label{eq:payoff}
\end{align}

To establish the existence of optimal pure strategies, we will focus our attention on the cases in which the Hider's equalizing strategy $\mathbf{p}^*$ is optimal.
Because we assume there are some positive coprime integers $k_1,\ldots,k_n$ and some $r >0$ such that $r_i^{k_i}=r$ for all $i \in [n]$, a Gittins search sequence against $\mathbf{p}^*$ can be broken up into cycles, where each cycle consists of $k_i$ searches in box $i$, $i \in [n]$, and has length $T = \sum_{i=1}^n k_i t_i$.
Taking advantage of this observation, the next lemma develops the right-hand side of~(\ref{eq:payoff}) further if $\xi$ is a Gittins search sequence against the Hider's equalizing strategy $\mathbf{p}^*$.



\begin{lemma} \label{lem:payoff}
Let $\xi_1=(s_1),\ldots,\xi_M=(s_M)$ denote $M$ distinct consistent Gittins search sequences against $\mathbf{p}^*$, where each subsequence $s_j$, $j=1,\ldots,M$, consists of $k_i$ searches in box $i$, for $i \in [n]$. 
Let $x_1,x_2,\ldots$ be a sequence taking values in $[M]$ and let $\xi$ be the Gittins search sequence $s_{x_1},s_{x_2},\ldots$.
The following results hold, where $I(\cdot)$ is the indicator function.
\begin{enumerate}[(i)]
    \item The payoff function $u(i,\xi)$ can be written as
\[
 u(i,\xi) = \frac{Tr}{1-r} + \sum_{j=1}^M w_i(s_j)\sum_{k=1}^\infty I(x_k=j)r^{k-1},
\]
where $w_i(s_j) = \sum_{\ell=1}^{k_i}  r_i^{\ell-1} q_i T_i^\ell(\xi_j)$ and $T=\sum_{i=1}^n k_it_i$.
\item If $\xi=(s)$ for some sequence $s$ then
\[
u(i,\xi) = \frac{Tr}{1-r} + \frac{w_i(s) }{1-r}.
\]
\end{enumerate}
\end{lemma}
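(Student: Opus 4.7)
The plan is to decompose the infinite sum in~(\ref{eq:payoff}) according to which cycle---that is, which subsequence $s_{x_k}$---the $\ell$th search of box $i$ lies in. Since each $s_j$ contains exactly $k_i$ searches of box $i$ and has total length $T$, every index $\ell \ge 1$ can be written uniquely as $\ell = (k-1)k_i + m$ with $k \ge 1$ and $m \in \{1,\ldots,k_i\}$, so that the $\ell$th search of box $i$ sits in the $m$th such position within the $k$th cycle of $\xi$.

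The first key step is to exploit the assumption $r_i^{k_i}=r$ to factor the overlook weight as $r_i^{\ell-1} = (r_i^{k_i})^{k-1}\, r_i^{m-1} = r^{k-1}\, r_i^{m-1}$. The second key step is the cumulative-time identity $T_i^{(k-1)k_i + m}(\xi) = (k-1)T + T_i^m(\xi_{x_k})$, which holds because the first $k-1$ completed cycles contribute $T$ apiece, and within the current cycle the search times up to the $m$th visit to box $i$ coincide with those of the pure strategy $\xi_{x_k}=(s_{x_k})$ (legitimate since $m \le k_i$, so we are still inside the first copy of $s_{x_k}$).

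Substituting both identities into~(\ref{eq:payoff}) and using $\sum_{m=1}^{k_i} r_i^{m-1} q_i = 1 - r_i^{k_i} = 1-r$, the $(k-1)T$ contribution separates off as $T(1-r)\sum_{k=1}^\infty (k-1)\,r^{k-1} = Tr/(1-r)$, a standard geometric-type evaluation. The remaining contribution collapses to $\sum_{k=1}^\infty r^{k-1}\, w_i(s_{x_k})$, and regrouping the cycles by which of the $M$ subsequences they equal gives $\sum_{j=1}^M w_i(s_j) \sum_{k=1}^\infty I(x_k=j)\, r^{k-1}$, which is exactly the formula in (i). Part (ii) is then immediate: when $\xi=(s)$ one takes $M=1$ and $x_k \equiv 1$, and $\sum_{k=1}^\infty r^{k-1} = 1/(1-r)$.

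The only real care required is the indexing bookkeeping between $\ell$, $k$, and $m$, and confirming that $T_i^m(\xi_{x_k})$ correctly captures the partial time within a single copy of $s_{x_k}$; once those are straight, the factorization of $r_i^{\ell-1}$ and the splitting of the sum turn the rest into routine geometric-series manipulation, so I do not expect a substantive obstacle.
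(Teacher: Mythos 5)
Your proposal is correct and follows essentially the same route as the paper's proof: the same decomposition $\ell=(k-1)k_i+m$, the same identities $r_i^{\ell-1}=r^{k-1}r_i^{m-1}$ and $T_i^{\ell}(\xi)=(k-1)T+T_i^{m}(\xi_{x_k})$, and the same geometric-series evaluations, with part (ii) obtained as the constant-sequence special case. No substantive differences.
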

\begin{proof}
For part (i), we observe that if $\ell=(k-1)k_i + \ell'$ for some $k=1,2,\ldots$ and some $1 \le \ell' \le k_i$, then $T_i^\ell(\xi_j) = (k-1)T + T_i^{\ell'}(\xi_j)$. Therefore,~(\ref{eq:payoff}) reduces to
\begin{align*}
    u(i,\xi) &= \sum_{k=1}^\infty \sum_{\ell=1}^{k_i}\left((k-1)T + \sum_{j=1}^M I(x_k=j)T_i^{\ell}(\xi_j) \right)r_i^{(k-1)k_i+(\ell-1)}q_i \\
     &= Tq_i\sum_{\ell=1}^{k_i}r_i^{\ell-1}\sum_{k=1}^\infty(k-1)r^{k-1} + \sum_{k=1}^\infty \sum_{\ell=1}^{k_i} \sum_{j=1}^M I(x_k=j)  r_i^{\ell-1}q_i T_i^{\ell}(\xi_j) r^{k-1} \text{ \ \ (since $r = r_i^{k_i}$)} \\
     &=Tq_i \frac{1-r_i^{k_i}}{1-r_i}\frac{r}{(1-r)^2}+ \sum_{k=1}^{\infty} \sum_{j=1}^M I(x_k=j) w_i(s_j) r^{k-1} \\ 
    &=\frac{Tr}{1-r}+ \sum_{j=1}^M w_i(s_j) \sum_{k=1}^\infty  I(x_k=j)  r^{k-1}
\end{align*}
For part (ii), we simply set $\xi_1=(s)$ and $x_k=1$ for all $k$. Applying part (i), we then have that
\[
u(i,\xi) = \frac{Tr}{1-r} + w_i(s)\sum_{k=1}^\infty r^{k-1} = \frac{Tr}{1-r} + \frac{w_i(s) }{1-r},
\]
which completes the proof.
\end{proof}

\bigskip

A key step to establish the optimality of a pure search strategy---or a search sequence---is to establish the equivalence between a pure search strategy and a mixed search strategy. 
Our next lemma shows that, for any Gittins search sequence against $\mathbf{p}^*$, it is always possible to construct an equivalent mixed strategy that consists of only consistent search sequences.


\begin{lemma} \label{lem:equiv}
    Let $\xi_1=(s_1),\ldots,\xi_{M} =(s_{M})$ denote $M \le n!$ distinct consistent Gittins search sequences against $\mathbf{p}^*$ and let $x_1,x_2,\ldots$ denote a sequence taking values in $[M]$. The search sequence $\xi^* \equiv s_{x_1},s_{x_2},\ldots$ is equivalent to the mixed Searcher strategy $\theta$ that chooses $\xi_j$ with probability
    \[
    \theta(\xi_j) = \sum_{k=1}^\infty I(x_k=j) (1-r)r^{k-1},
    \]
    for $j \in [M]$.
\end{lemma}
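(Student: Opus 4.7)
The plan is to verify the equivalence $u(i,\xi^*) = u(i,\theta)$ for every box $i \in [n]$ by direct computation, using Lemma~\ref{lem:payoff} to handle both sides. First, I would confirm that $\theta$ is a legitimate probability distribution: since for every $k \ge 1$ exactly one $j \in [M]$ satisfies $x_k = j$, swapping the order of summation gives
\[
\sum_{j=1}^M \theta(\xi_j) = \sum_{k=1}^\infty (1-r) r^{k-1} \sum_{j=1}^M I(x_k=j) = \sum_{k=1}^\infty (1-r) r^{k-1} = 1.
\]
All terms are non-negative, so the interchange is justified by Tonelli's theorem.

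Next, I would compute both sides and compare. Applying Lemma~\ref{lem:payoff}(i) directly to $\xi^*$ yields
\[
u(i,\xi^*) = \frac{Tr}{1-r} + \sum_{j=1}^M w_i(s_j) \sum_{k=1}^\infty I(x_k=j)\, r^{k-1}.
\]
For the mixed strategy side, I would apply Lemma~\ref{lem:payoff}(ii) to each pure strategy $\xi_j=(s_j)$, obtaining $u(i,\xi_j) = \frac{Tr}{1-r} + \frac{w_i(s_j)}{1-r}$. Then by linearity of expectation,
\[
u(i,\theta) = \sum_{j=1}^M \theta(\xi_j) u(i,\xi_j) = \frac{Tr}{1-r} + \frac{1}{1-r}\sum_{j=1}^M \theta(\xi_j)\, w_i(s_j).
\]
Substituting the definition of $\theta(\xi_j)$ cancels the $(1-r)$ factor, leaving $\sum_{j=1}^M w_i(s_j) \sum_{k=1}^\infty I(x_k=j)\, r^{k-1}$, which is precisely the non-constant term in $u(i,\xi^*)$.

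There is no real conceptual obstacle here; the proof reduces to applying Lemma~\ref{lem:payoff} and carrying out a bookkeeping swap of summations. The only care needed is to ensure that the interchanges of the infinite sums are justified, which holds automatically from non-negativity. The key insight that makes the lemma work is that the geometric discounting by $r^{k-1}$ in Lemma~\ref{lem:payoff}(i) is exactly matched by the geometric weights $(1-r)r^{k-1}$ in the definition of $\theta$, so the cycle-index $k$ gets absorbed into the mixing probabilities without distortion.
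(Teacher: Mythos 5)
Your proposal is correct and follows essentially the same route as the paper's own proof: verify that $\theta$ sums to one, apply Lemma~\ref{lem:payoff}(ii) with linearity to compute $u(i,\theta)$, apply Lemma~\ref{lem:payoff}(i) to $\xi^*$, and observe that substituting the definition of $\theta(\xi_j)$ makes the two expressions coincide. No gaps; the added remark about Tonelli's theorem is a harmless extra justification the paper leaves implicit.
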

\begin{proof}
    First note that $\theta$ is a well-defined Searcher strategy, since
    \[
    \sum_{j=1}^{M} \theta(\xi_j) = \sum_{k=1}^\infty (1-r)r^{k-1} = 1.
    \]
    By Lemma~\ref{lem:payoff}, part (ii), the expected payoff of box $i$ against $\xi_j$ is
\begin{align*}
     u(i,\xi_j)  = \frac{Tr}{1-r}+ \frac{ w_i(s_j)}{1-r}.
\end{align*} 
Hence, the expected search time of box $i$ under $\theta$ is  
\begin{align*} 
    u(i,\theta) = \sum_{j=1}^{M} \theta(\xi_j) u(i,\xi_j)  = \frac{Tr}{1-r}+ \sum_{j=1}^M \frac{\theta(\xi_j) }{1-r} w_i(s_j) 
\end{align*}
Applying Lemma~\ref{lem:payoff}, part (i), the expected search time of box $i$ against $\xi^*$ is
\begin{align*}
    u(i,\xi^*)
    &= \frac{Tr}{1-r} + \sum_{j=1}^{M} w_i(s_j)\sum_{k=1}^\infty I(x_k=j)r^{k-1}\\ 
    &= \frac{Tr}{1-r}+  \sum_{j=1}^M w_i(s_j) \frac{\theta(\xi_j)}{1-r},
\end{align*}
by definition of $\theta$. Hence, $u(i,\theta) =   u(i,\xi^*)$.
\end{proof}

\bigskip

Before presenting our main result, we present another lemma that will help us to show it is possible---under certain conditions---to construct a search sequence that is equivalent to a mixed search strategy that mixes between consistent Gittins search sequences against $\mathbf{p}^*$. Though this lemma is crucial to our main result, it stands alone as a lemma in probability theory. We were not able to find it in the literature, and it may be of independent interest.


\begin{lemma}\label{lem:seq}
Let $M$ be a positive integer. If $1-1/M\le r<1$, then for any $\lambda_1,\ldots,\lambda_M \ge 0$ with $\lambda_1+\cdots+\lambda_M=1$, there exists a sequence $x_1,x_2,\ldots$ taking values in $[M]$ such that
\begin{align}
\sum_{k=1}^\infty I(\{x_k=j\}) (1-r)r^{k-1} = \lambda_j, \label{eq:partition}
\end{align}
for each $j\in[M]$, where $I$ is the indicator function. 
Moreover, if $r^d \ge 1-\min_j \lambda_j$, then there are at least $M^d$ sequences that satisfy condition~(\ref{eq:partition}). 
\end{lemma}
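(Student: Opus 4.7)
The plan is to prove the lemma by a greedy online partitioning argument. Write $p_k = (1-r)r^{k-1}$, so that the equation to be satisfied is $\sum_{k : x_k = j} p_k = \lambda_j$ for each $j \in [M]$. I will think of constructing the sequence $x_1, x_2, \ldots$ one term at a time, maintaining a vector of remaining demands $\lambda_j^{(k)} \ge 0$ with $\lambda_j^{(1)} = \lambda_j$, and updating $\lambda_{x_k}^{(k+1)} = \lambda_{x_k}^{(k)} - p_k$, with all other coordinates unchanged. Note the invariant $\sum_j \lambda_j^{(k)} = r^{k-1}$, since $\sum_{k' \ge k} p_{k'} = r^{k-1}$.

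For the first assertion, I would use a greedy rule: at step $k$, choose $x_k$ to be any index $j$ maximising $\lambda_j^{(k)}$. The key point is that this is feasible, meaning $\lambda_{x_k}^{(k)} \ge p_k$ so that the new remaining demand is non-negative. By pigeonhole, the maximum of the $\lambda_j^{(k)}$ is at least the average $r^{k-1}/M$, and the hypothesis $r \ge 1 - 1/M$ gives $1/M \ge 1-r$, hence $r^{k-1}/M \ge (1-r) r^{k-1} = p_k$. So the greedy rule never fails. Since $\sum_j \lambda_j^{(k)} = r^{k-1} \to 0$ and each $\lambda_j^{(k)} \ge 0$, we obtain $\lambda_j^{(k)} \to 0$ for every $j$, and therefore $\sum_{k : x_k = j} p_k = \lambda_j - \lim_k \lambda_j^{(k)} = \lambda_j$, as required.

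For the counting assertion, I would exploit that the partial sum $\sum_{k=1}^d p_k = 1 - r^d$ is bounded above by $\min_j \lambda_j$ under the hypothesis $r^d \ge 1 - \min_j \lambda_j$. This means that during the first $d$ steps, any sequence of choices $x_1,\ldots,x_d \in [M]$ is feasible: even if all the mass $p_1,\ldots,p_d$ is dumped onto a single index $j$, the remaining demand $\lambda_j - (1-r^d) \ge 0$, while demands of other indices are untouched. After step $d$, the remaining demand vector still satisfies $\lambda_j^{(d+1)} \ge 0$ and $\sum_j \lambda_j^{(d+1)} = r^d$, so the greedy argument from the first part applies verbatim from step $d+1$ onward, producing a valid completion. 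The $M^d$ choices for the prefix $x_1,\ldots,x_d$ yield $M^d$ distinct sequences, since the prefixes themselves differ.

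The only genuine delicacy will be verifying that the greedy step actually does not fail at any point $k$, and I expect this to be clean because the inequality $r \ge 1 - 1/M$ is used in exactly the right place, comparing the pigeonhole average to the next atom $p_k$. The rest is bookkeeping with the geometric tail identities $\sum_{k \ge K} p_k = r^{K-1}$ and $\sum_{k=1}^d p_k = 1 - r^d$.
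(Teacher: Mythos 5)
Your proposal is correct and follows essentially the same argument as the paper: both maintain the residual demands $\lambda_j$ minus the mass already assigned, use the invariant that these residuals sum to $r^{k-1}$, and apply pigeonhole together with $r\ge 1-1/M$ to show some box can always absorb the next atom $(1-r)r^{k-1}$; the counting claim is handled identically via $1-r^d\le\min_j\lambda_j$. Your only cosmetic difference is selecting the argmax rather than an arbitrary index with above-average residual, which if anything makes the construction explicit.
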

\begin{proof}
We define the sequence $x_1,x_2,\ldots$ recursively and will prove by induction on $K$ that for each $K=0,1,\ldots$, we have
\begin{align}
\sum_{k = 1}^K I(\{x_k=j\}) (1-r)r^{k-1} \le \lambda_j, \label{eq:hyp}
\end{align}
for each $i\in[M]$. This is trivially true for $K=0$. Suppose~(\ref{eq:hyp}) is true for some $K \ge 0$, and note that
\begin{align}
\sum_{j =1}^M \left( \lambda_j - \sum_{k =1}^K I(\{x_k=j\}) (1-r)r^{k-1} \right)  = 1 - \sum_{k=1}^K (1-r)r^{k-1} = r^K. \label{eq:sum}
\end{align}
By the induction hypothesis, every term in the outer sum on the left-hand side of~(\ref{eq:sum}) is non-negative, and it follows that for some $j\in[M]$, we have 
\[
\lambda_{j} - \sum_{k =1}^K I(\{x_k=j\})(1-r)r^{k-1} \ge r^K/M \ge (1-r)r^K,
\]
since $r>1-1/M$. Then we can take $x_{K+1}=j$, and~(\ref{eq:hyp}) is clearly satisfied for $K+1$. Equation~(\ref{eq:partition}) follows from taking limits in~(\ref{eq:sum}) as $K \rightarrow \infty$.  

If $1-r \leq \min_j \lambda_j$, there are $M$ choices for $x_1$, and consequently there are at least $M$ distinct sequences. In general, if $1- r^d = \sum_{k=1}^d (1-r) r^{k-1} \leq \min_j\lambda_j$ for some $d \geq 1$, there are $M$ choices for each $x_i$, $i= 1, \ldots, d$, so at least $M^d$ distinct sequences meet condition (\ref{eq:partition}).
\end{proof}

\bigskip

It is interesting to point out that Lemma~\ref{lem:seq} relies on the axiom of choice in the induction step.

Also, note that we cannot strengthen the bound $r \ge 1-1/M$ in Lemma~\ref{lem:seq}, since for any $r<1-1/M$, if we take $\lambda_1=\cdots=\lambda_M=1/M$, then for any choice $x_1=i$, we have
\[
\sum_{k=1}^\infty I(\{x_k=j\}) (1-r)r^{k-1} \ge (1-r) > 1/M=  \lambda_{j}. 
\]

We now use Lemmas~\ref{lem:equiv} and~\ref{lem:seq} to prove our main results in the next theorem, which gives sufficient conditions under which an optimal pure strategy for $G$ can be found.
\begin{theorem} \label{lem:m_seq_diff_prob}
    Let $\xi_1, \ldots, \xi_M$ be $M$ consistent Gittins search sequences against $\mathbf p^*$ and suppose $\theta$ is some Searcher mixed strategy  with support $\{\xi_1, \ldots, \xi_M\}$. 
    If $r= r_i^{k_i} \geq 1- 1/M$ for all $i\in [n]$, then
    \begin{enumerate}[(i)]
        \item  there exists some search sequence $\xi^*$ such that
    \begin{align}
    u(i, \xi^*)= u(i, \theta) \label{eq:pure}
    \end{align}
    for all $i \in [n]$, \item moreover if $r^d \geq 1-\min_j\theta(\xi_j)$ for some $d= 1, \ldots$ then there are at least $M^d$ distinct search sequences satisfying \eqref{eq:pure}, and 
   
    \item in particular, if $\mathbf{p}^*$ is optimal for the Hider and $r \ge 1-1/n$, then the Searcher has an optimal pure strategy.
    \end{enumerate}
\end{theorem}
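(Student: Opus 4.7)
The plan is to derive all three parts directly from Lemmas~\ref{lem:equiv} and~\ref{lem:seq}, which have done essentially all the analytic work.

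For part (i), I would set $\lambda_j = \theta(\xi_j)$ for $j \in [M]$. Since $\theta$ is a probability distribution, $\sum_j \lambda_j = 1$, and the hypothesis $r \ge 1 - 1/M$ is exactly the condition required by Lemma~\ref{lem:seq}. Applying that lemma produces a sequence $x_1, x_2, \ldots$ taking values in $[M]$ such that
\[
\sum_{k=1}^\infty I(x_k = j)(1-r)r^{k-1} = \theta(\xi_j), \qquad j \in [M].
\]
Writing $\xi_j = (s_j)$, I would define $\xi^* = s_{x_1}, s_{x_2}, \ldots$. Lemma~\ref{lem:equiv} then tells me that $\xi^*$ is equivalent to the mixed strategy that selects $\xi_j$ with probability $\sum_{k} I(x_k = j)(1-r)r^{k-1}$, which by construction is exactly $\theta(\xi_j)$. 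This gives $u(i, \xi^*) = u(i, \theta)$ for every $i \in [n]$.

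For part (ii), the second half of Lemma~\ref{lem:seq} yields at least $M^d$ distinct sequences $(x_k)$ satisfying the weighting condition whenever $r^d \ge 1 - \min_j \theta(\xi_j)$. The only thing to check is that distinct sequences $(x_k)$ produce distinct pure strategies $\xi^* = s_{x_1}, s_{x_2}, \ldots$; this is immediate because the subsequences $s_1, \ldots, s_M$ are pairwise distinct (as $\xi_1, \ldots, \xi_M$ are distinct consistent Gittins sequences) and each has the same length $\sum_i k_i$, so concatenations $s_{x_1}s_{x_2}\cdots$ are in bijection with the index sequences $(x_k)$. Each such pure strategy satisfies \eqref{eq:pure} by part (i).

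For part (iii), I would invoke Theorem~\ref{thm:CLG}: since $\mathbf{p}^*$ is optimal for the Hider, there is an optimal Searcher mixed strategy $\theta$ supported on a set of at most $n$ consistent Gittins search sequences against $\mathbf{p}^*$. Call this support $\{\xi_1, \ldots, \xi_M\}$, so $M \le n$. The hypothesis $r \ge 1 - 1/n$ implies $r \ge 1 - 1/M$, so part (i) produces a pure strategy $\xi^*$ with $u(i, \xi^*) = u(i, \theta) \le V$ for every $i \in [n]$. Hence $\xi^*$ guarantees the value against every Hider pure strategy and is therefore an optimal pure Searcher strategy.

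The main obstacle is really already handled by Lemma~\ref{lem:seq}; the present theorem is a packaging of that combinatorial fact. The only subtlety in the writeup is being careful in part (ii) that different $(x_k)$ sequences really yield different pure strategies $\xi^*$, which uses distinctness and equal length of the blocks $s_j$, and in part (iii) that the appeal to Theorem~\ref{thm:CLG} requires the optimal Hider strategy to be $\mathbf{p}^*$ so that the support consists of Gittins sequences against $\mathbf{p}^*$ specifically.
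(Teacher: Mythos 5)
Your proof is correct and follows essentially the same route as the paper: set $\lambda_j = \theta(\xi_j)$, invoke Lemma~\ref{lem:seq} to build the index sequence, apply Lemma~\ref{lem:equiv} for equivalence, and use Theorem~\ref{thm:CLG} for part (iii). Your extra remark in part (ii) that distinct index sequences yield distinct concatenations (because the blocks $s_j$ are pairwise distinct and of equal length) makes explicit a point the paper treats as immediate, and is a welcome bit of added care.
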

\begin{proof}
To prove part (i) of the theorem, let $\lambda_j = \theta(\xi_j)$ for $j \in [M]$, so that $\sum_{j=1}^M \lambda_j =1$.
By definition, we have that
\[
u(i, \theta) = \sum_{j=1}^M \lambda_j u(i, \xi_j).
\]
Let $x_1,x_2,\ldots$ be a sequence taking values in $[M]$ such that
\begin{align}\label{eq:qk}
    \sum_{k=1}^\infty I(x_k=j) (1-r)r^{k-1} = \lambda_j
\end{align} 
for all $j \in [M]$. By Lemma~\ref{lem:seq}, such a sequence exists because $r\geq 1-1/M$. 

Since each $\xi_j$ is a consistent Gittins search sequence against $\mathbf{p}^*$, it can be written as $\xi_j=(s_j)$, for some finite subsequence $s_j$. Let $\xi^*$ be the pure Searcher strategy $s_{x_1}, s_{x_2}, \ldots$ 
Then part (i) of the theorem follows from Lemma~\ref{lem:equiv}. 

For part (ii), if $1- r^d \leq \min_j \theta(\xi_j)$ for some $d\geq 1$, we know from the second statement of Lemma~\ref{lem:seq} that at least $M^d$ distinct sequences $x_1, x_2, \ldots$ exist. So, by the construction of $\xi^*$, it is straightforward that there are at least $M^d$ distinct pure strategies $\xi^*$ which satisfy (\ref{eq:pure}).

Part (iii) of the theorem follows from the fact that if $\mathbf{p}^*$ is optimal, then by Theorem~\ref{thm:CLG}, there is an optimal Searcher strategy $\theta$ that mixes between $n$ consistent Gittins search sequences against $\mathbf{p}^*$. So by part (i) of this theorem, there must be a Searcher pure strategy whose expected search time against each $i\in [n]$ is the same as that of $\theta$, and which is therefore optimal.
\end{proof}

\bigskip

We can use Theorem~\ref{lem:m_seq_diff_prob} to show that the Searcher has optimal pure strategies in cases of the game $G$ for which we know $\mathbf{p}^*$ is optimal.
As indicated in Theorem~\ref{lem:m_seq_diff_prob}, for there to exist an optimal pure search strategy, $r$---the probability of not finding the Hider in the first cycle of $k_i$ searches in box $i$, $i=1,\ldots,n$---needs to be sufficiently large.
Intuitively, with a large $r$, the subsequence chosen in the first cycle plays a less important role, which makes it possible to achieve \eqref{eq:pure} by prioritizing the other boxes in the following cycles.

\begin{corollary} \label{cor:equal-det}
If the detection probabilities $q_1,\ldots,q_n$ are all equal to some $q \le 1/n$, then the Searcher has an optimal pure strategy. In addition, if $1- (1-q)^d \leq \min_j t_j/T$ for some $d \geq 1$, then there are at least $n^d$ distinct optimal pure strategies. 
\end{corollary}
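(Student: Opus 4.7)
The plan is to combine Theorem~\ref{thm:q-equal} (which identifies the optimal strategies in the equal detection probability case) with Theorem~\ref{lem:m_seq_diff_prob} (which converts a mixture of consistent Gittins sequences against $\mathbf{p}^*$ into an equivalent pure strategy under a lower bound on $r$). Both hypotheses required by Theorem~\ref{lem:m_seq_diff_prob}(iii) are readily available here.

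First, I would set up the parameters for this special case. Since $q_i = q$ for every $i$, we may take $k_i = 1$ for all $i \in [n]$, so that $r = r_i^{k_i} = 1 - q$. The assumption $q \le 1/n$ is then exactly $r \ge 1 - 1/n$. By Theorem~\ref{thm:q-equal}, the Hider's equalizing strategy $\mathbf{p}^*$ is optimal in this game, which is the remaining hypothesis needed to apply Theorem~\ref{lem:m_seq_diff_prob}(iii). That theorem immediately delivers the existence of an optimal pure Searcher strategy.

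For the counting statement, I would unpack the specific optimal mixed strategy $\theta$ produced via Theorem~\ref{thm:q-equal}. Using the optimal strategy from Theorem~\ref{thm:q=1} for the $q = 1$ sub-problem (start at box $i$ with probability $t_i/T$ and then cycle through the remaining boxes in order), Theorem~\ref{thm:q-equal} gives an optimal $\theta$ with support on the $M = n$ consistent Gittins sequences $\xi_i = (\sigma_i)$, where $\sigma_i = (i, i+1, \ldots, n, 1, \ldots, i-1)$, and with weights $\theta(\xi_i) = t_i/T$. Hence $\min_j \theta(\xi_j) = \min_j t_j/T$. The hypothesis $1 - (1-q)^d \le \min_j t_j/T$ is then exactly the condition $r^d \ge 1 - \min_j \theta(\xi_j)$ required by Theorem~\ref{lem:m_seq_diff_prob}(ii), which then guarantees at least $M^d = n^d$ distinct pure strategies $\xi^*$ equivalent to $\theta$, all of them optimal.

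There is no real obstacle here; the only point that requires care is matching the parameter identifications (namely $k_i = 1$, $r = 1-q$, $M = n$, and $\min_j \theta(\xi_j) = \min_j t_j/T$) so that the hypotheses of Theorem~\ref{lem:m_seq_diff_prob}(ii)--(iii) are verified. The proof is then a two-line invocation.
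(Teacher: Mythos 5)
Your proposal is correct and follows essentially the same route as the paper: apply Theorem~\ref{thm:q-equal} to get optimality of $\mathbf{p}^*$ and of the mixture $\theta$ over the $n$ cyclic rotations with weights $t_i/T$, then invoke Theorem~\ref{lem:m_seq_diff_prob} parts (iii) and (ii) with $k_i=1$, $r=1-q\ge 1-1/n$, $M=n$, and $\min_j\theta(\xi_j)=\min_j t_j/T$. Nothing is missing.
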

\begin{proof}
By Theorem~\ref{thm:q-equal}, the Hider's equalizing strategy $\mathbf{p}^*$ is optimal and the Searcher strategy $\theta$ which chooses each sequence $\xi_i= (i+1, i+2, \ldots, n, 1, 2, \ldots, i-1)$, $i = 1, \ldots, n$, with probability $t_i/T$ is optimal. Since $r= r_1=\cdots=r_n =1-q \ge 1-1/n$, the existence of an optimal pure search strategy follows Theorem~\ref{lem:m_seq_diff_prob}, part (iii). Moreover, by Theorem~\ref{lem:m_seq_diff_prob} part (ii), if $1-(1-q)^d= 1-r^d \leq \min_i \theta(\xi_i) = \min_i t_i/T$, there exist at least $n^d$ optimal pure strategies.
\end{proof}

\begin{corollary} \label{cor:equal-unit}
    If the detection probabilities are all equal to some $q \le 1/2$ and the search times are all equal, then the Searcher has an optimal pure strategy. Furthermore, if $1-(1-q)^d \le 1/2$ for some $d \geq 1$, then there are at least $2^d$ distinct optimal pure strategies.
\end{corollary}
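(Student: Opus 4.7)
The plan is to combine Proposition~\ref{prop:equal} with Theorem~\ref{lem:m_seq_diff_prob}. With equal detection probabilities $q$ and equal search times $t$, we are in the setting of Proposition~\ref{prop:equal}, which supplies an explicit optimal mixed Searcher strategy $\theta$ with support of size $M=2$, namely the two consistent Gittins search sequences $\xi_1 = (1,2,\ldots,n)$ and $\xi_2 = (n,n-1,\ldots,1)$ against the equalizing strategy $\mathbf p^* = (1/n,\ldots,1/n)$, each taken with probability $1/2$.

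Since $r_1 = \cdots = r_n = 1-q$ are all equal, we may take $k_i = 1$ for all $i$, so the common value is $r = 1-q$. The hypothesis $q \le 1/2$ gives
\[
r = 1-q \;\ge\; \tfrac{1}{2} \;=\; 1 - \tfrac{1}{M},
\]
so the premise of Theorem~\ref{lem:m_seq_diff_prob} is satisfied. Applying part~(i) of that theorem to the optimal mixed strategy $\theta$ above produces a pure search sequence $\xi^*$ with $u(i,\xi^*) = u(i,\theta)$ for every $i \in [n]$; since $\theta$ is optimal, so is $\xi^*$. This proves the first assertion.

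For the second assertion, observe that $\min_j \theta(\xi_j) = 1/2$, so the hypothesis $1-(1-q)^d \le 1/2$ is exactly the condition $r^d \ge 1 - \min_j \theta(\xi_j)$ required by part~(ii) of Theorem~\ref{lem:m_seq_diff_prob}. That part then yields at least $M^d = 2^d$ distinct pure search sequences each equivalent to $\theta$, and hence each optimal.

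There is essentially no technical obstacle here: the corollary is a direct specialization of Theorem~\ref{lem:m_seq_diff_prob} once Proposition~\ref{prop:equal} is invoked to identify a two-point optimal mixed strategy. The only point to verify carefully is that the two sequences in Proposition~\ref{prop:equal} are consistent Gittins sequences against $\mathbf p^*$, which is immediate because with equal $t_i$, equal $q_i$, and uniform $\mathbf p^*$ the indices $\psi_i$ are all equal throughout the search, so every permutation (repeated) is a consistent Gittins sequence.
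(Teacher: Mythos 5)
Your proof is correct and follows essentially the same route as the paper: invoke Proposition~\ref{prop:equal} to get a two-point optimal mixture and then apply Theorem~\ref{lem:m_seq_diff_prob} with $M=2$ and $r=1-q\ge 1/2$. If anything, your citation of part~(i) is the more precise one, since the paper's appeal to part~(iii) (whose hypothesis is $r\ge 1-1/n$) is really carried by the $M=2$ case of part~(i).
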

\begin{proof}
Again, by Theorem~\ref{thm:q-equal}, the Hider's equalizing strategy $\mathbf{p}^*$ is optimal. By Proposition~\ref{prop:equal}, the Searcher has an optimal strategy that chooses two pure strategies each with probability $1/2$. Since $r= r_1=\cdots=r_n =1-q \ge 1/2$, the existence of an optimal pure search strategy follows Theorem~\ref{lem:m_seq_diff_prob}, part (iii). Moreover, by Theorem~\ref{lem:m_seq_diff_prob} part (ii), if $1-(1-q)^d= 1-r^d \leq \min_i \theta(\xi_i) = 1/2$, there exist at least $2^d$ optimal pure strategies.
\end{proof}

\begin{corollary}
\label{cor:n=2}
If there are $n=2$ boxes with $t_1=t_2$ and $r_1^k=r_2^{k+1}\ge 1/2$ for some $k=1,\ldots,12$, then the Searcher has an optimal pure strategy.
\end{corollary}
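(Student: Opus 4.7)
The plan is to show this is a direct application of Theorem~\ref{lem:m_seq_diff_prob}, part (iii), using Theorem~\ref{thm:RG78} to verify the optimality of $\mathbf{p}^*$. So there is no substantive new content to prove; the work is simply checking that all the hypotheses line up correctly.

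First, I would observe that since $k$ and $k+1$ are consecutive positive integers, they are automatically coprime. Setting $k_1 = k$ and $k_2 = k+1$, the assumption $r_1^k = r_2^{k+1}$ gives a common value $r \equiv r_1^{k_1} = r_2^{k_2}$, so the standing assumption of the paper (that there exist coprime $k_1, \ldots, k_n$ and $r$ with $r_i^{k_i} = r$) is met for these two boxes.

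Next, the hypothesis that $k \le 12$ combined with $t_1 = t_2$ is exactly the hypothesis of Theorem~\ref{thm:RG78}, which therefore guarantees that the Hider's equalizing strategy $\mathbf{p}^*$ is optimal in $G$.

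Finally, since $n = 2$, the bound $1 - 1/n$ appearing in Theorem~\ref{lem:m_seq_diff_prob} equals $1/2$, and by hypothesis $r = r_1^k = r_2^{k+1} \ge 1/2$. Thus both premises of Theorem~\ref{lem:m_seq_diff_prob}, part (iii), hold: $\mathbf{p}^*$ is optimal and $r \ge 1 - 1/n$. Applying that part of the theorem yields an optimal pure strategy for the Searcher, completing the proof. The only ``obstacle,'' if one can call it that, is confirming that the coprimality requirement is satisfied by the natural choice $k_1 = k$, $k_2 = k+1$, which is immediate.
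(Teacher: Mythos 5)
Your proposal is correct and follows exactly the paper's own argument: invoke Theorem~\ref{thm:RG78} to establish that $\mathbf{p}^*$ is optimal, then apply Theorem~\ref{lem:m_seq_diff_prob}, part (iii), with $r = r_1^k = r_2^{k+1} \ge 1/2 = 1 - 1/n$. Your additional check that $k_1 = k$ and $k_2 = k+1$ are coprime is a small but welcome bit of extra care that the paper leaves implicit.
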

\begin{proof}
    By Theorem~\ref{thm:RG78}, the Hider's equalizing strategy $\mathbf{p}^*$ is optimal, so the corollary follows immediately from Theorem~\ref{lem:m_seq_diff_prob}, part (iii).
\end{proof}

We illustrate Corollary~\ref{cor:equal-det} with the following example.

\begin{example}
Consider $n=3$ with $q_1= q_2= q_3 = q= 0.2$ and $t_1= 2.00$, $t_2 = 2.88$, $t_3 = 5.12$. Let $s_1 = 1, 2, 3$, $s_2 = 2, 3, 1$ and $s_3 = 3, 1, 2$. We have $r= 1-q = 0.8$ and $T= t_1+t_2+t_3=10$. By Corollary \ref{cor:equal-det}, the Searcher has an optimal strategy. We can construct the optimal strategy as follows.

First, following Lemma \ref{lem:seq}, we construct a sequence $\sigma = x_1, x_2, \ldots$ taking values in $\{1, 2,3\}$ such that $\sum_{k=1}^\infty I(x_k= i) (1-r)r^{k-1}= t_i/T$ for $i= 1, 2, 3$. Let $\delta_i(
K
) = \sum_{k=1}^K I(x_k= i) (1-r)r^{k-1}$ for $K= 0, 1, \ldots$ Note that $\delta_i(0) = 0$ for all $i$. For any $K= 1, 2, \ldots$, we define $x_K$ recursively such that if $\delta_i(K-1)+ (1-r)r^{K-1} \leq t_i/10$, then $x_K = i$.
\begin{itemize}
\item For $K=1$, we have $\delta_i(0)+ (1-r)r^{K-1}= 1-r= 0.2 \leq t_i/T$ for all $i$, so $x_1$ can be 1, 2 or 3.
\item For $K=2$, if we let $x_1 = 1$, it is easy to check that $x_2$ just can be $2$ or $3$.
\item Continuing this way, we obtain the sequence $\sigma = 1,2,2,3,(3)$.
\end{itemize}

Second, let $\xi^*= s_1, s_2, s_2, s_3, (s_3) = 1,2,3, 2,3,1,2,3,1, 3,1, 2, (3,1,2)$. Then, $\xi^*$ is optimal by Theorem \ref{lem:m_seq_diff_prob}. 
Note that if we change the choice of $x_1$ or $x_2$, we will get a different sequence $\sigma$ which leads to another different optimal pure strategy.
\end{example}

\subsection{Non-existence of Optimal Pure Strategies} \label{sec:non-existence}

We now consider what conditions ensure that there will be no optimal pure strategy for the Searcher. As in the previous section we focus on the case that the Hider's equalizing strategy $\mathbf{p}^*$ is optimal.

For a particular instance of the game $G$, suppose $\theta$ is an optimal Searcher strategy that is a mixture of distinct consistent Gittins search sequences $\xi,\ldots,\xi_M$, for some $M \le n!$.
Write $\lambda_{\max}(\theta) = \max_{1 \le j \le M} \theta(\xi_j)$, and let $\lambda^*$ denote the supremum of $\lambda_{\max}(\theta)$ over all such optimal Searcher strategies.


\begin{lemma} \label{lem:r_theta}
    If $\mathbf{p}^*$ is optimal and $r_1^{k_1}=\cdots=r_n^{k_n}=r < 1- \lambda^*$, then there does not exist an optimal pure strategy for the Searcher.
\end{lemma}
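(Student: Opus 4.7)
My plan is to argue by contradiction, and to convert any hypothetical optimal pure Searcher strategy into an equivalent mixed strategy whose maximum weight on a single sequence already exceeds $\lambda^*$. Suppose $\xi^*$ is an optimal pure Searcher strategy. Since $\mathbf{p}^*$ is an optimal Hider strategy, we have $u(\mathbf{p}^*, \xi^*) = V$, so $\xi^*$ is a best response to $\mathbf{p}^*$, hence a Gittins search sequence against $\mathbf{p}^*$. By the cycle structure noted at the end of Subsection~\ref{sec:def}, I can write $\xi^* = s_{x_1}, s_{x_2}, \ldots$, where each $s_{x_k}$ is a subsequence consisting of exactly $k_i$ searches of box $i$ for every $i \in [n]$. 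Let $s_1, \ldots, s_M$ denote the distinct cycles that appear, so that $x_1, x_2, \ldots$ takes values in $[M]$.

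Next, I would set $\xi_j = (s_j)$ and apply Lemma~\ref{lem:equiv} to produce the mixed Searcher strategy
\[
\theta(\xi_j) = \sum_{k=1}^\infty I(x_k = j)(1-r) r^{k-1}, \qquad j = 1, \ldots, M,
\]
which satisfies $u(i, \theta) = u(i, \xi^*)$ for every $i \in [n]$. Because $\xi^*$ guarantees payoff $V$, so does $\theta$, meaning that $\theta$ is optimal. By construction $\theta$ is a mixture of distinct consistent Gittins search sequences against $\mathbf{p}^*$, so it is a member of the class of optimal strategies entering the supremum defining $\lambda^*$.

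The contradiction comes from a single numerical observation: the $k = 1$ term in the series for $\theta(\xi_{x_1})$ contributes exactly $(1-r) r^0 = 1-r$, so $\lambda_{\max}(\theta) \ge \theta(\xi_{x_1}) \ge 1 - r$. Combined with the hypothesis $r < 1 - \lambda^*$, this gives $\lambda_{\max}(\theta) \ge 1 - r > \lambda^*$, contradicting the definition of $\lambda^*$ as the supremum of $\lambda_{\max}$ over exactly such optimal strategies. Hence no optimal pure Searcher strategy can exist.

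The main obstacle I anticipate is in step one, namely verifying that every cycle $s_j$ arising in a Gittins search sequence against $\mathbf{p}^*$ really does correspond to a consistent Gittins search sequence $\xi_j = (s_j)$, so that $\theta$ lies in the class defining $\lambda^*$ and Lemma~\ref{lem:equiv} applies as stated. Once the tie-breaking inside a single cycle has been identified with a permutation $\sigma$ of $[n]$, the remainder of the argument reduces to the elementary estimate $\theta(\xi_{x_1}) \ge 1 - r$ obtained from the first term of the geometric series.
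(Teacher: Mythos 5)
Your proof is correct and follows essentially the same route as the paper's: decompose the hypothetical optimal pure strategy into cycles, use Lemma~\ref{lem:equiv} to build an equivalent optimal mixed strategy $\theta$, and observe that the $k=1$ term forces $\lambda_{\max}(\theta)\ge 1-r>\lambda^*$, a contradiction. The consistency subtlety you flag is also present (and left implicit) in the paper's own argument, which simply indexes over all $n!$ consistent Gittins sequences and writes $\xi^*$ as a concatenation of their cycles.
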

\begin{proof}
Suppose there is an optimal pure Searcher strategy $\xi^*$. Let $\xi_1=(s_1),\ldots,\xi_{n!}=(s_{n!})$ be the consistent Gittins search sequences against $\mathbf{p}^*$. Since $\mathbf{p}^*$ is optimal, $\xi^*$ must be a Gittins search sequence against $\mathbf{p}^*$. Hence, $\xi$ can be written as $s_{x_1}, s_{x_2}, \ldots$, where $x_k \in [n!]$, for $k=1,2,\ldots$. Let $\theta$ be the search strategy that chooses $\xi_j$ with probability 
\[
\theta(\xi_j) = \lambda_j \equiv \sum_{k=1}^\infty I(x_k = j) (1-r) r^{k-1},
\]
for $j\in[n!]$. 

By Lemma~\ref{lem:equiv}, strategies $\theta$ and $\xi^*$ are equivalent, so $\theta$ is optimal. But 
\[
\lambda^* \ge \lambda_{\max}(\theta) = \max_j \lambda_j \ge \lambda_{x_1} = 1-r + \sum_{k=2}^\infty I(x_k = x_1) (1-r) r^{k-1} \ge 1-r > \lambda^*,
\]
a contradiction.
\end{proof}

\bigskip

Lemma~\ref{lem:r_theta} does not give much clue as to how one might calculate $\lambda^*$, and it is not even clear that $\lambda^*$ is strictly less than 1. We now consider the two classes of solutions for which we know that $\mathbf{p}^*$ is optimal and show that there is a non-empty interval of values of $r$ for which there is no optimal pure strategy.

\begin{theorem} \label{thm:nopure}
For equal detection probabilities $q_1=\cdots=q_n = q$, there is no optimal pure strategy for $q>q^*$, where
\begin{align}
q^* \equiv 1- \frac{1}{T(T- t_{\min})}\sum_{\{a, b\} \subset [n]} t_at_b, \label{eq:q-ineq}
\end{align}
and $t_{\min} = \min_i t_i$.
\end{theorem}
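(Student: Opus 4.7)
The plan is to apply Lemma~\ref{lem:r_theta}: I will show every optimal Searcher strategy $\theta$ of the form it considers satisfies $\lambda_{\max}(\theta) \le q^*$, hence $\lambda^* \le q^*$. This suffices, because $q > q^*$ then forces $r = 1-q < 1-q^* \le 1-\lambda^*$, so Lemma~\ref{lem:r_theta} rules out any optimal pure strategy. Since all detection probabilities equal $q$, I can take $k_1 = \cdots = k_n = 1$, so $T = \sum_i t_i$ and the consistent Gittins search sequences against $\mathbf{p}^*$ are precisely the periodic sequences $(\sigma)$ indexed by permutations $\sigma$ of $[n]$. By Theorem~\ref{thm:q-equal}, $\mathbf{p}^*$ is optimal; by Theorem~\ref{thm:CLG}, any optimal $\theta$ in question is a mixture of such $(\sigma)$.

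Next I would compute $u(i,(\sigma))$ via Lemma~\ref{lem:payoff}(ii). With $k_i = 1$ and $r = 1-q$ this reduces to $u(i,(\sigma)) = T_i^1((\sigma)) + T(1-q)/q$, where $T_i^1((\sigma))$ is the time to reach box $i$ within one cycle of $\sigma$. Optimality $u(i,\theta) = V$ for every $i$, together with the value formula in Theorem~\ref{thm:q-equal}, rearranges into the linear constraint
\[
\sum_\sigma \theta((\sigma))\, T_i^1((\sigma)) = C, \qquad C := T - \frac{1}{T}\sum_{\{a,b\}\subset[n]} t_a t_b, \qquad i \in [n].
\]
Now fix any such optimal $\theta$, let $\sigma_0$ attain $\lambda := \theta((\sigma_0)) = \lambda_{\max}(\theta)$, and let $i^*$ denote the last box in the permutation $\sigma_0$. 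Then $T_{i^*}^1((\sigma_0)) = T$, while for every other permutation $\sigma$ we have $T_{i^*}^1((\sigma)) \ge t_{i^*} \ge t_{\min}$. Separating the $\sigma_0$ term at $i = i^*$ yields
\[
C \ge \lambda T + (1-\lambda)\, t_{\min},
\]
and rearranging gives $\lambda \le (C - t_{\min})/(T - t_{\min})$. A short simplification using the definitions of $C$ and $q^*$ shows $(C - t_{\min})/(T - t_{\min}) = q^*$, so $\lambda \le q^*$. Taking the supremum over $\theta$ gives $\lambda^* \le q^*$, and Lemma~\ref{lem:r_theta} finishes the proof.

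The only step with real content is choosing the right test $i$ for a given $\sigma_0$: because the constraint $\sum_\sigma \theta((\sigma)) T_i^1((\sigma)) = C$ holds separately for each $i$, I can tailor $i$ to $\sigma_0$, and picking $i^*$ equal to the last box of $\sigma_0$ maximizes $T_{i^*}^1((\sigma_0)) = T$ and so constrains $\lambda$ as tightly as possible. After that, both the inequality $T_{i^*}^1((\sigma)) \ge t_{\min}$ and the algebraic identification with $q^*$ are routine. I do not anticipate any serious obstacle; the difficulty is essentially just spotting that this test-box choice is available and extremal.
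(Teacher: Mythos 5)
Your proposal is correct and follows essentially the same route as the paper's proof: it establishes $\lambda_{\max}(\theta)\le q^*$ for every optimal mixture of consistent Gittins sequences by testing the box that appears last in the heaviest permutation, and then invokes Lemma~\ref{lem:r_theta}. The only cosmetic differences are that you work directly with $T_i^1$ instead of $w_i=qT_i^1$ and substitute $t_{\min}$ for $t_{i^*}$ one step earlier; the inequalities and the final identification with $q^*$ are identical.
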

\begin{proof}
Since the detection probabilities are equal, according to Theorem~\ref{thm:q-equal} the equalizing strategy~$\mathbf{p}^*$ is optimal. The consistent Gittins search sequences against $\mathbf{p}^*$ are $\xi_1=(\sigma_1),\ldots,\xi_{n!}=(\sigma_{n!})$, where $\sigma_1,\ldots,\sigma_{n!}$ are the permutations of $[n]$. Let $\theta$ be an optimal Searcher strategy that mixes between $\xi_1,\ldots,\xi_{n!}$ and let $\lambda_j=\theta(\xi_j),j \in [n!]$. 

Without loss of generality suppose the last term of $\sigma_1$ is $i$.
Using Lemma~\ref{lem:payoff}, part (ii) and the fact that $1-r=q$, the expected search time of box $i$ under $\theta$ can be written as
\[
u(i,\theta) =  \frac{T(1-q)}{q} + \sum_{j=1}^{n!} \frac{\lambda_j w_i(\sigma_j) }{q}.
\]
Since $\theta$ is optimal, $u(i,\theta)$ is equal to the value $V$ of the game, which is given in Theorem~\ref{thm:q-equal}. Hence,
\begin{align}\label{eq:V_u}
    0=V- u(i,\theta)  &= \left(\frac{T}{q}- \frac{1}{T} \sum_{\{a, b\} \subset [n]} t_at_b\right) - \left(\frac{T(1-q)}{q} + \sum_{j=1}^{n!} \frac{\lambda_j w_i(\sigma_j)}{q}\right) \nonumber \\
    &= T - \frac{1}{T} \sum_{\{a, b\} \subset [n]} t_at_b - \sum_{j=1}^{n!} \frac{\lambda_j w_i(\sigma_j)  }{q}
\end{align} 
Since the last term of $\sigma_1$ is $i$, we must have $w_i(\sigma_1) = qT$. Also, $w_i(\sigma_j) \geq qt_i$ for all $j \neq 1$, so
\[
    \sum_{j=1}^{n!}\frac{\lambda_j w_i(\sigma_j)}{q} \geq  \lambda_1  T +  \sum_{j\neq 1} \lambda_j t_i =
     (T-t_i)\lambda_1+t_i.
\]
Combining the preceding with~(\ref{eq:V_u}), we obtain 
\[
\lambda_1 \le 1- \frac{1}{T(T-t_i)}\sum_{\{a, b\} \subset [n]} t_at_b \le     q^*,
\]
by definition of $q^*$. By a similar analysis, the inequality above holds when $\lambda_1$ is replaced by any $\lambda_j, 2\le j \le n!$, so $\lambda_{\max}(\theta) \le q^*$ and $\lambda^* \le q^*$, by definition of $\lambda^*$.

Hence, if $q > q^*$, then $q > \lambda^*$, so $r=1-q <1-\lambda^*$, and by Lemma~\ref{lem:r_theta}, there is no optimal pure strategy.
\end{proof}

\bigskip

Note that $q^*$ is always at least $1/2$, because
\[
\sum_{\{a, b\} \subset [n]} t_at_b  = \frac{1}{2} \left( T^2 - \sum_{i\in [n]} t_i^2 \right) 
\le  \frac{1}{2} \left( T^2 - \sum_{i\in [n]} t_i t_{\min} \right)  = \frac{1}{2} T(T-t_{\min}).
\]
It is straightforward to verify that in the case of equal search times as well as equal detection probabilities, $q^*$ is equal to $1/2$, so we obtain the following corollary of Theorem~\ref{thm:nopure}.
\begin{corollary} \label{cor:equal}
For equal detection probabilities $q$ and equal search times, there is no optimal pure strategy for $q>1/2$.
\end{corollary}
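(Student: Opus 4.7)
The plan is to derive this directly from Theorem~\ref{thm:nopure} by computing the threshold $q^*$ explicitly under the equal search-time assumption, and showing it equals $1/2$.

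Write $t_1 = \cdots = t_n = t$, so that $T = nt$ and $t_{\min} = t$. I would first compute the sum $\sum_{\{a,b\} \subset [n]} t_a t_b = \binom{n}{2} t^2 = \frac{n(n-1)}{2} t^2$. Substituting into the definition
\[
q^* = 1 - \frac{1}{T(T - t_{\min})} \sum_{\{a,b\} \subset [n]} t_a t_b
\]
gives
\[
q^* = 1 - \frac{1}{nt \cdot (n-1)t} \cdot \frac{n(n-1)}{2} t^2 = 1 - \frac{1}{2} = \frac{1}{2}.
\]

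Then I would invoke Theorem~\ref{thm:nopure}: for equal detection probabilities $q$, there is no optimal pure strategy whenever $q > q^*$. Since $q^* = 1/2$ in the equal search-times case, the conclusion follows immediately.

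There is essentially no obstacle here; the only nontrivial step is the arithmetic cancellation showing $q^* = 1/2$, and the remark immediately preceding the corollary has already flagged that $q^* \geq 1/2$ always, with equality in the equal search-time case. So the proof is just a two-line computation plus an appeal to Theorem~\ref{thm:nopure}.
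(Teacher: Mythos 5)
Your proposal is correct and matches the paper's argument exactly: the paper likewise obtains the corollary by noting that $q^*=1/2$ when all search times are equal and then invoking Theorem~\ref{thm:nopure}, merely leaving the arithmetic that you carry out explicitly as "straightforward to verify."
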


The bound $q^*$ is tight in this case, since there is an optimal pure strategy for all $q \le 1/2$, by Corollary~\ref{cor:equal-unit}.
Intuitively, if $q > 1/2$ then the first few searches affect the expected total search time so significantly that it is necessary for the Searcher to use a mixed strategy to achieve optimality.
If $q \leq 1/2$, then it is possible to achieve optimality with a single search sequence---a pure strategy---by prioritizing the other boxes later on to neutralize the advantage of the first few boxes in the search sequence.

For general search times, the bound of $q^*$ may still be tight.  For example, for the three-box game, $q_1 = q_2= q_3=q$, $t_{\min}= t_1$, and $t_2^2= t_1t_3$, we have $q^* = \frac{t_2^2+t_2t_3+t_3^2}{(t_1+t_2+t_3)(t_2+t_3)}$. 
It is straightforward to show that the search sequence $3,2,1, (1,2,3)$ is optimal when $q= q^*$.

We now turn to the case that $t_1=t_2=1$ and $r=r_1^k=r_2^{k+1}$ for $k= 1, \ldots, 12$. The Hider's equalizing strategy $\mathbf{p}^*$ is optimal, by Theorem~\ref{thm:RG78}. There are two consistent Gittins search sequences $\xi_1= (s_1)$, $\xi_2= (s_2)$ against $\mathbf{p^*}$ such that $s_1= 1,2,\ldots$, $s_2= 2,1,\ldots$ and the last $2k-1$ terms of $s_1$ and $s_2$ are the same. For example, $k=3$, then $s_1= 1,2,2,1,2,1,2$ and $s_2= 2,1,2,1,2,1,2$. It is straightforward to verify that $w_1(s_1)= w_1(s_2)-q_1$ and $w_2(s_1)= w_2(s_2)+q_2$

Let $\theta$ be an optimal Searcher strategy that chooses $\xi_j$ with probability $\lambda_j$ for $j=1, 2$. By Lemma~\ref{lem:payoff}, the expected search time of box $1$ under $\theta$ can be written as
\begin{align*}
    u(1, \theta)&= \frac{Tr}{1-r}+\frac{1}{1- r}\big(\lambda_1w_1(s_1)+\lambda_2w_1(s_2)\big) \\
    &=\frac{Tr}{1-r}+\frac{1}{1- r}\big(\lambda_1(w_1(s_2)-q_1)+\lambda_2w_1(s_2)\big)
    \\&=\frac{Tr}{1-r}+\frac{1}{1- r}\big(w_1(s_2)-q_1\lambda_1\big)
\end{align*} 
Similarly, the expected search time of box 2 under $\theta$ can be written as
\begin{align*}
    u(2, \theta)&= \frac{Tr}{1-r}+\frac{1}{1- r}\big(\lambda_1w_2(s_1)+\lambda_2w_2(s_2)\big)\\&=\frac{Tr}{1-r}+\frac{1}{1- r}\big(\lambda_1(w_2(s_2)+q_2)+\lambda_2w_2(s_2)\big)\\&=\frac{Tr}{1-r}+\frac{1}{1- r}\big(w_2(s_2)+q_2\lambda_1\big)
\end{align*}
Since $\theta$ is optimal, we have $u(1,\theta)= u(2,\theta)$ or $w_1(s_2)-q_1\lambda_1= w_2(s_2)+q_2\lambda_1$.
Therefore,
\begin{align}
\label{eq:lambda1}
    \lambda_1 = \frac{w_1(s_2)-w_2(s_2)}{q_1+q_2}= \frac{w_1(s_2)-w_2(s_2)}{2-r_1-r_2}.
\end{align} 

By Lemma~\ref{lem:r_theta}, there is no optimal pure strategy if $r<1-\max(\lambda_1, \lambda_2)$ or equivalently,
\begin{align}
\label{eq:lambda1_r}
r < 1 - \lambda_1 \qquad \text{and} \qquad r < 1 - \lambda_2 = \lambda_1
\end{align} 
Solving (\ref{eq:lambda1})-(\ref{eq:lambda1_r}) computationally for $k=1,\ldots,12$, $r= r_1^k= r_2^{k+1}$, we get $r^*$ such that there is no optimal pure strategy for $r<r^*$ as follows.
\begin{center}
\begin{tabular}{||c c|| c c||} 
 \hline
 $k$ & $r^*$ & $k$ & $r^*$ \\ [1ex] 
 \hline\hline
 1 & 0.216757 & 7 & 0.399087 \\ 
 \hline
 2 & 0.233196 & 8 & 0.426468 \\
 \hline
 3 & 0.267022 & 9 & 0.451679 \\
 \hline
 4 & 0.302809 & 10 & 0.474914 \\
 \hline
 5 & 0.337219 & 11 & 0.496363 \\
 \hline
 6 & 0.369366 & 12 & 0.023657 \\ [1ex] 
 \hline
\end{tabular}
\end{center} Note that for any arbitrary $t_1=t_2=t$, the bounds $r^*$ do not change.

\section{The Case of Two Boxes} \label{sec:two}
In this section we restrict our attention to the case of $n=2$ boxes. In Subsection~\ref{sec:opt-2boxes}, we examine conditions under which there are (or are not) optimal pure strategies.  In Subsection~\ref{sec:best-2boxes}, we pose the question: if there is no optimal pure strategy, what is the best possible pure strategy we can find?

\subsection{Optimal Pure Strategies for Two Boxes} \label{sec:opt-2boxes}

We first investigate the existence of optimal pure Searcher strategies for the two-box game, $n=2$, making the standing assumption that $t_1 \leq t_2$. We start with a general result that follows from Theorem~\ref{lem:m_seq_diff_prob}.

\begin{theorem} \label{thm:n=2}
If $n=2$ and $r_1^{k_1}=r_2^{k_2} \ge 1/2$ then the Searcher has an optimal pure strategy.
\end{theorem}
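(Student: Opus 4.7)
The plan is to apply Theorem~\ref{lem:m_seq_diff_prob}(iii), which requires $r \ge 1 - 1/n$; for $n=2$ this reduces to $r \ge 1/2$, precisely the hypothesis. Hence the theorem will follow once we verify that the Hider's equalizing strategy $\mathbf{p}^*$ is optimal in the two-box setting. The bulk of the argument goes into this verification.

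To establish optimality of $\mathbf{p}^*$, I would construct an equalizing Searcher mixed strategy using the two consistent Gittins search sequences against $\mathbf{p}^*$. Write $\xi_1 = (s_1)$ and $\xi_2 = (s_2)$ for these sequences, with tie-breaking in favor of box $1$ and box $2$ respectively. For the mixture $\theta = \lambda_1 \xi_1 + \lambda_2 \xi_2$, Lemma~\ref{lem:payoff}(ii) gives
\[
u(i, \theta) = \frac{Tr}{1-r} + \frac{\lambda_1 w_i(s_1) + \lambda_2 w_i(s_2)}{1-r},
\]
so $u(1,\theta) = u(2,\theta)$ is a single linear equation in $\lambda_1$ (with $\lambda_2 = 1-\lambda_1$). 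A valid solution with $\lambda_1,\lambda_2 \in [0,1]$ exists precisely when the payoff differences $d_j := u(1,\xi_j) - u(2,\xi_j)$ satisfy $d_1 \le 0 \le d_2$.

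Establishing this sign condition is the main obstacle. Intuitively it should hold because $s_j$ prioritizes box $j$, so each search of box $j$ occurs in $s_j$ no later than the corresponding search in $s_{3-j}$; a Hider in box $j$ is therefore discovered faster under $s_j$ than under $s_{3-j}$. I would formalize this by a swap/exchange argument that compares the search times $T_i^\ell(s_1)$ and $T_i^\ell(s_2)$ within a single cycle, using that $s_1$ and $s_2$ contain identical multisets of searches and differ only at tie points of the Gittins index; the resulting inequalities on $w_i(s_j)$ feed into the formula above to produce the sign condition on $d_1, d_2$.

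With an equalizing $\theta$ in hand, the standard saddle-point argument closes the loop. On one hand $\max_{\mathbf p} u(\mathbf p, \theta) = V' := u(1,\theta) = u(2,\theta)$, so $V \le V'$. On the other hand $\theta$ is a convex combination of best responses to $\mathbf{p}^*$, so any Gittins sequence $\xi_j$ satisfies $u(\mathbf{p}^*,\xi_j) = u(\mathbf{p}^*,\theta) = V'$; hence $\min_\xi u(\mathbf{p}^*,\xi) = V'$, giving $V \ge V'$. Thus $V = V'$ and $\mathbf{p}^*$ is optimal, and Theorem~\ref{lem:m_seq_diff_prob}(iii) then immediately produces an optimal pure Searcher strategy.
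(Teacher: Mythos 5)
Your reduction of the theorem to ``$\mathbf{p}^*$ is optimal for $n=2$'' is where the argument breaks: that claim is false in general, even under the hypothesis $r=r_1^{k_1}=r_2^{k_2}\ge 1/2$. Theorem~\ref{thm:RG78} gives an explicit family of counterexamples: for $t_1=t_2$ and $r_1^{13}=r_2^{14}=1/2$ the hypothesis of Theorem~\ref{thm:n=2} holds (and $13,14$ are coprime), yet the equalizing strategy is \emph{not} optimal for the Hider. Consequently the sign condition $d_1\le 0\le d_2$ that you flag as ``the main obstacle'' cannot be established in general --- if it held, your closing saddle-point argument (which is otherwise sound) would prove $\mathbf{p}^*$ optimal, contradicting Roberts--Gittins. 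The exchange intuition you offer only yields $u(j,\xi_j)\le u(j,\xi_{3-j})$, i.e.\ $d_1\le d_2$; it does not place $0$ between them.

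The paper's proof sidesteps the question of whether $\mathbf{p}^*$ is optimal entirely. It takes the true optimal Hider strategy $\mathbf{p}$ and the optimal Searcher mixture $\theta$ supported on two consistent Gittins sequences against $\mathbf{p}$ (Theorem~\ref{thm:CLG}). If these two sequences differ, then in the two-box case they must agree on a common prefix $s$ up to the first point of indifference, at which the Bayesian posterior is exactly $\mathbf{p}^*$; the tails $\xi_1',\xi_2'$ are then consistent Gittins sequences against $\mathbf{p}^*$. Part (i) of Theorem~\ref{lem:m_seq_diff_prob} --- which needs only $r\ge 1-1/M$ with $M=2$, not optimality of $\mathbf{p}^*$ --- produces a single sequence $\xi^*$ equivalent to the mixture of the tails, and prepending $s$ gives a pure strategy equivalent to $\theta$. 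You would need to adopt this prefix-splitting idea (or something like it) to repair your proof; the route through global optimality of $\mathbf{p}^*$ is not available.
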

\begin{proof}
Let $\mathbf{p}$ denote the optimal Hider strategy and let $\mathbf{p}^*$ denote the Hider's equalizing strategy. If $\mathbf{p}=\mathbf{p}^*$, then the theorem follows from Theorem~\ref{lem:m_seq_diff_prob}, so assume that $\mathbf p\neq \mathbf p^*$.

By Theorem~\ref{thm:CLG}, there is an optimal Searcher strategy $\theta$ whose support is two consistent Gittins search sequences $\xi_1$ and $\xi_2$ against $\mathbf{p}$. If $\xi_1=\xi_2$, then the there is nothing to prove, so assume that $\xi_1 \neq \xi_2$. Then $\xi_1$ and $\xi_2$ must be identical up until the $j$th term, for some $j$, at which point, the Searcher is indifferent between searching the two boxes: that is, the new conditional hiding distribution is $\mathbf{p}^*$. So each $\xi_i, i=1,2$ can be written $\xi_i=s,\xi'_i$ for some finite sequence $s$ and some consistent Gittins search sequences $\xi'_1$ and $\xi'_2$ against $\mathbf{p}^*$. Let $\theta'$ be the search strategy that chooses $\xi'_i$ with probability $\theta(\xi_i)$ for $i=1,2$. By Theorem~\ref{lem:m_seq_diff_prob}, there is a pure strategy $\xi^*$ such that $u(i,\xi^*)=u(i,\theta')$ for $i=1,2$. Then, if we set $\xi^{**}=s,\xi^*$, it is easy to see that $u(i,\xi^{**})=u(i,\theta)$ for $i=1,2$, so $\xi^{**}$ is optimal.
\end{proof}

\bigskip

We now consider separately the special case when $q_1=q_2=q$. By Corollary~\ref{cor:equal-det} (or Theorem~\ref{thm:n=2}), the Searcher has an optimal pure strategy for $q \le 1/2$ and by Theorem~\ref{thm:nopure}, the Searcher has no optimal pure strategy for $q>q^*$, where $q^*$ reduces to $t_2/(t_1+t_2)$. This bound is tight, since a straightforward calculation shows that the strategy $2,1,(1,2)$ is optimal for the Searcher when $q=t_2/(t_1+t_2)$. Also, note that $2,1, (1,2)$ is a unique optimal pure strategy when $q= t_2/(t_1+t_2)$ and $t_2> t_1$. Indeed, any optimal pure strategy must be a Gittins sequence $\xi = s_{x_1}, s_{x_2} \ldots$ where each $x_k \in \{1,2\}$, $s_1 = 1, 2$, and $s_2 = 2,1$. Suppose $s_{x_1}= 1,2$, then it is obvious $u(2, \xi) \geq u(2,\xi')$ where $\xi' = 1, 2, (2,1)$. However, $$u(2, \xi) \geq u(2,\xi') = (t_1+t_2)/q- t_1^2/(t_1+t_2) > (t_1+t_2)/q- (t_1 t_2)/(t_1+t_2) =V.$$ So, $s_{x_1}$ must be $2,1$. Then, it is straightforward that $s_{x_k} = 1,2$ for all $k=2, \ldots$, otherwise $u(1, \xi) >V$. Hence, the strategy $2,1,(1,2)$ is unique.

One might conjecture that there is an optimal pure strategy for all the values of $q$ between $1/2$ and $t_2/(t_1+t_2)$.
However, it turns out that this conjecture is not true in general.
We show that, at least for some choices of $t_1$ and $t_2$, there is a non-empty interval of values of $q$ between $1/2$ and $t_2/(t_1+t_2)$ for which there is no optimal pure strategy. For example, if $t_2/(t_1+t_2) = 0.8$, then there is no optimal pure strategy for $0.553<q<0.723$. In general, we have the following result.
\begin{lemma}
Consider the case with $n=2$ and equal detection probabilities $q_1=q_2=q$ and search times $t_1$ and $t_2$ with $t_1 \le t_2$.
If 
\[
\frac{t_1}{t_1+t_2} < \frac{1}{4} \qquad \text{and} \qquad 1- \sqrt{\frac{t_1}{t_1+t_2}}< q <  \frac{1}{2} \left(\sqrt{1-\frac{4t_1}{t_1+t_2}}+1 \right),
\]
then there is no optimal pure strategy.
\end{lemma}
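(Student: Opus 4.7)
The plan is to exploit the same reduction used in the discussion preceding the lemma. Since $q_1=q_2=q$, the equalizing strategy $\mathbf{p}^*$ is optimal by Theorem~\ref{thm:q-equal}, and the only choice of coprime integers with $r_1^{k_1}=r_2^{k_2}$ is $k_1=k_2=1$, so the consistent Gittins search sequences against $\mathbf{p}^*$ are $\xi_1=(s_1)=(1,2)$ and $\xi_2=(s_2)=(2,1)$. Any optimal pure strategy $\xi^*$ must be a best response to $\mathbf{p}^*$, hence a Gittins search sequence of the form $\xi^*=s_{x_1},s_{x_2},\ldots$ with $(x_k)\in\{1,2\}^\infty$. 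By Lemma~\ref{lem:equiv}, $\xi^*$ is equivalent to the mixed strategy $\theta$ placing probability $\lambda_j=\sum_{k}I(x_k=j)(1-r)r^{k-1}$ on $\xi_j$, so $\theta$ is itself optimal, and in particular $u(1,\theta)=u(2,\theta)$. Applying Lemma~\ref{lem:payoff}(ii) with $w_1(s_1)=qt_1$, $w_2(s_2)=qt_2$, and $w_1(s_2)=w_2(s_1)=qT$ (where $T=t_1+t_2$), the indifference equation simplifies to $\lambda_1=t_1/T$, a quantity I denote by $x$.

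The heart of the argument is then to show that $\lambda_1=x$ is not achievable by any choice of $(x_k)$ when $q$ lies in the stated interval, using only the first two terms. If $x_1=1$ then $\lambda_1\ge 1-r=q$, but by hypothesis $q>1-\sqrt{x}>1/2>x$, contradicting $\lambda_1=x$; hence $x_1=2$, and shifting the index gives $\lambda_1=r\lambda'_1$ where $\lambda'_1=\sum_{k\ge 1}I(x_{k+1}=1)(1-r)r^{k-1}$, so $\lambda'_1=x/r$. Now inspecting $x_2$: the choice $x_2=2$ gives $\lambda'_1\le r$, i.e.\ $x\le r^2$; the choice $x_2=1$ gives $\lambda'_1\ge 1-r$, i.e.\ $x\ge r(1-r)$. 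Each is a constraint on $r=1-q$ that I will show is ruled out by the hypotheses, producing the desired contradiction.

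The only real algebra is translating these two inequalities back into bounds on $q$ and confirming they match the statement. The inequality $x>r^2$ is equivalent to $r<\sqrt{x}$, i.e.\ $q>1-\sqrt{x}$, which is exactly the lower bound; and, since $x<1/4$ makes the discriminant of $r^2-r+x$ positive, the inequality $r(1-r)>x$ is equivalent to $r\in\bigl((1-\sqrt{1-4x})/2,(1+\sqrt{1-4x})/2\bigr)$, i.e.\ $q\in\bigl((1-\sqrt{1-4x})/2,(1+\sqrt{1-4x})/2\bigr)$. The upper bound of the lemma is the top of this interval, and the lower hypothesis $q>1-\sqrt{x}$ dominates the bottom because $1-\sqrt{x}>(1-\sqrt{1-4x})/2$ reduces to $\sqrt{1-4x}>2\sqrt{x}-1$, which is automatic for $x<1/4$ since the right-hand side is then negative. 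Thus both $x_2=1$ and $x_2=2$ are impossible, no such sequence $(x_k)$ exists, and no optimal pure strategy exists. The only thing I expect to require care is this last bookkeeping step, making sure the two-step pruning on $(x_k)$ produces exactly the two bounds in the lemma rather than some weaker pair.
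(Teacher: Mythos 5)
Your proof is correct and follows essentially the same route as the paper's: both reduce optimality of a pure strategy to the necessary condition $\lambda_1=q\sum_k I(x_k=1)r^{k-1}=t_1/T$, rule out $x_1=1$ using $q>1/2>t_1/T$, and then show that $x_2=2$ forces $t_1/T\le r^2$ while $x_2=1$ forces $t_1/T\ge r(1-r)$, each of which is excluded by one of the two hypothesized bounds on $q$. The only difference is presentational — you route the indifference condition through Lemma~\ref{lem:equiv} and the equivalent mixture weights, whereas the paper computes $u(i,\xi)$ directly from Lemma~\ref{lem:payoff}(i) — but the quantities and inequalities are identical.
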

\begin{proof}
We first derive necessary conditions for a pure strategy to be optimal. By Theorem~\ref{thm:q-equal}, the Hider's equalizing strategy $\mathbf{p}^*$ is optimal and the value of the game is
\begin{align}
V=\frac{t_1+t_2}{q}-\frac{t_1 t_2}{t_1+t_2}. \label{eq:V}
\end{align}
Let $\xi$ be an optimal pure strategy. It must be a Gittins search sequence against $\mathbf{p}^*$, so $\xi$ must be of the form $\xi=s_{x_1},s_{x_2},\ldots$, where $s_1=(1,2)$ and $s_2=(2,1)$ and $x_k \in \{1,2\}$. Let $\gamma_k=I(x_k=1)$ (in other words, $\gamma_k$ is equal to 1 if $s_{x_k}=(1,2)$, otherwise it is equal to 0). Note that $w_1(s_1)=qt_1$ and $w_1(s_2)=q(t_1+t_2)$. Similarly for $w_2(s_1)$ and $w_2(s_2)$.

By Lemma~\ref{lem:payoff}, part (i),
\begin{align}
    u(1,\xi) & = \frac{T(1-q)}{q} + \sum_{k=1}^\infty q t_1 \gamma_k r^{k-1} + \sum_{k=1}^\infty q(t_1+t_2)(1-\gamma_k)r^{k-1} \nonumber \\
    & = \frac{T(1-q)}{q} + q(t_1+t_2)\sum_{k=1}^\infty  r^{k-1} - qt_2\sum_{k=1}^\infty \gamma_k r^{k-1} \nonumber\\
    & = \frac{T}{q} - q t_2 \sum_{k=1}^\infty \gamma_k r^{k-1}. \label{eq:u1}
\end{align}
Similarly, 
\begin{align}
u(2,\xi) = \frac{T}{q} - q t_1 \sum_{k=1}^\infty (1- \gamma_k )r^{k-1} \label{eq:u2}
\end{align}
Since $\xi$ is optimal, we must have $V=u(1,\xi)=u(2,\xi)$, and setting~$(\ref{eq:V})$,~(\ref{eq:u1}) and~(\ref{eq:u2}) to be equal gives
\begin{align}\label{eq4}
    \begin{cases}
        q\sum_{k=1}^\infty \gamma_k r^{k-1}  =  t_1/(t_1+t_2),\\
        q\sum_{k=1}^\infty  (1-\gamma_k)  r^{k-1} = t_2/(t_1+t_2).
    \end{cases}
\end{align} 

Let $y_1= 1- \sqrt{\frac{t_1}{t_1+t_2}}$ and $y_2= \frac{1}{2}(\sqrt{1-\frac{4t_1}{t_1+t_2}}+1)$. Observe that $1/2<y_1 < y_2 < t_2/(t_1+t_2)$ when $t_1/(t_1+t_2)<1/4$. 
It is easy to see that $\gamma_1$ must be $0$ because if $\gamma_1 = 1$, then $q\sum_{k=1}^\infty \gamma_k  r^{k-1} \ge q >1/2 \ge t_1/(t_1+t_2)$, contradicting~(\ref{eq4}). 

We consider two cases for $\gamma_2$.  First, if $\gamma_2=0$ then $q\sum_{k=1}^\infty (1-\gamma_k) r^{k-1}  \ge q(1+r)=q(2-q)$. It is straightforward to check that $q(2-q)>t_2/(t_1+t_2)$ for $1- \sqrt{\frac{t_1}{t_1+t_2}}<q<  \frac{1}{2}(\sqrt{1-\frac{4t_1}{t_1+t_2}}+1)$.
Second, if $\gamma_2=1$, then $q\sum_{k=1}^\infty  \gamma_k r^{k-1} \ge qr=q(1-q)$, and it is straightforward to check that $q(1-q) > t_1/(t_1+t_2)$ for $1- \sqrt{\frac{t_1}{t_1+t_2}}<q<  \frac{1}{2}(\sqrt{1-\frac{4t_1}{t_1+t_2}}+1)$.
In either case, we have a contradiction, so we conclude that there does not exist an optimal pure strategy.
\end{proof}

\subsection{Best Pure Strategies} \label{sec:best-2boxes}
We have seen that for equal search times and equal detection probabilities, there is no optimal pure strategy solution for $q>1/2$ when $n=2$.
A natural question to ask is what the \textit{best} pure strategy is for this range of values of $q$. 
More precisely, we wish to find a search sequence $\xi$ that minimizes $\max \{u(1,\xi),u(2,\xi)\}$. \cite{Ruckle} shows that for equal search times, there always exists $\xi$ such that $u(1,\xi)=u(2,\xi) = \cdots = u(n, \xi)$.
However, a search sequence that results in the same expected total search time regardless of where the Hider hides is not necessarily the best search sequence.
Below is a counterexample.

\begin{example}
\label{ex:best}
Consider $n=2$ with $q_1=q_2=0.5$ and $t_1=t_2=1$.
According to Corollary~\ref{cor:n=2}, the Searcher has an optimal pure strategy.
It is straightforward to verify that the value of the game is $7/2$ and the search sequence 
\[
1,2,(2,1)
\]
is optimal.
However, one can also verify that a different search sequence
\[
1,2,1,2,2,2,2,1,1,2,1,2,1,1,1,2,(2,1)
\]
results in an expected search time $923/256$ whether the Hider hides in box 1 or box 2.
In other words, it is possible that there exists more than one search sequence that results in the same expected search time regardless of where the Hider hides, so finding one such search sequence does not mean it is the best search sequence that minimizes $\max \{u(1,\xi),u(2,\xi)\}$.
\hfill $\Box$
\end{example}

\bigskip

While $u(1, \xi) = \cdots = u(n, \xi)$ is not a sufficient condition for $\xi$ to be the best pure strategy, it is clearly a necessary condition, for otherwise it is possible to make small tweaks to $\xi$ to reduce $\max_{i=1,\dots,n} u(i, \xi)$.
For the case with $n=2$, $q_1=q_2=q=2/3$ so that $r=1-q=1/3$, a simple calculation shows that the search sequence $\xi=1,2,(2,2,1,1)$ guarantees an expected search time of 
\[
u(1,\xi)=u(2,\xi)= 2 + r + \frac{r^2 + 3 r^3}{1-r^2} = \frac{31}{12}.
\]
We conjecture that this is the best pure strategy for $q=2/3$. Computational results indicate that any pure strategy with expected search time at most $31/12$ must begin with the sequence $1,2$ followed by 15 cycles of $2,2,1,1$ (or the same sequence with 1 and 2 swapped).
Below we generalize the conjecture to the case with $n=2$ and $q_1 = q_2 = (m-1)/m$, for $m=3, 4, \ldots$.

\begin{conjecture}
\label{co:best}
Suppose $n=2$ and $q_1 = q_2 =(m-1)/m$, for some integer $m \geq 3$. Let $r=1-q_1 = 1/m$.
The best pure strategy is $\xi=1,2,(2,2,\ldots,2,1,1,\ldots,1),$ in which the Searcher begins with $1, 2$, and then indefinitely repeats the same pattern that consists of $(m-1)$ searches in box 2 followed by $(m-1)$ searches in box 1.
The resulting expected search time is
\[
u(1,\xi)=u(2,\xi)=2 + r + \frac{\sum_{j=2}^{m-1} r^j + m r^m}{1-r^{m-1}} = 2 + \frac{1}{m} + \frac{m^{m-2} + m - 2}{(m-1)(m^{m-1}-1)}.
\]
\end{conjecture}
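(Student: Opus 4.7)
The plan is to prove the conjecture in three stages: verify that $\xi$ achieves the balance $u(1, \xi) = u(2, \xi)$ at the claimed value, show that any best pure strategy must itself be balanced, and establish that the balanced minimum is attained at $\xi$. For the first stage, decompose $\xi$ into the prefix $1, 2$ (contributing $1 + r$ to $u(1, \xi)$ and $2$ to $u(2, \xi)$, leaving the state at $L_1 = L_2 = 1$) and the periodic tail $(s)$ with $s = 2^{m-1} 1^{m-1}$. A direct geometric series calculation for the tail starting from $(0, 0)$ yields
\[
u(1, (s)) = \frac{1}{1-r} + \frac{m-1}{1-r^{m-1}}, \qquad u(2, (s)) = \frac{1}{1-r} + \frac{(m-1) r^{m-1}}{1-r^{m-1}}.
\]
Since the tail begins from the balanced state $(1,1)$, its contribution scales by $r$; combining with the prefix gives $u(1, \xi) - u(2, \xi) = -q + r(m-1)$, which vanishes precisely because $r = 1/m$. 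Straightforward algebraic simplification then matches the closed form in the conjecture.

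For the second stage, I apply a local exchange argument. If $u(1, \xi) > u(2, \xi)$ and both are finite, the sequence must contain infinitely many adjacencies $\sigma_k = 2, \sigma_{k+1} = 1$ with $L_2(k-1) \to \infty$. Swapping such a pair to $\sigma_k = 1, \sigma_{k+1} = 2$ changes $u(i, \xi)$ by $\mp q r^{L_i(k-1)}$, so selecting the swap with $q r^{L_2(k-1)} < u(1, \xi) - u(2, \xi)$ strictly reduces $\max(u(1, \xi), u(2, \xi))$. Hence any best pure strategy is balanced, and it suffices to minimize the common value over balanced strategies.

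Stage three is the crux. I would exploit self-similarity: any balanced strategy must return infinitely often to states $(L, L)$, and the gaps between consecutive visits partition the sequence into ``macroblocks'' containing equal numbers of $1$'s and $2$'s. Each macroblock's contribution to $u(1, \xi) - u(2, \xi)$ is its unit-level imbalance scaled by $r^L$ at its starting level, so the global balance constraint forces the cumulative macroblock contributions to cancel the prefix imbalance $-q$. The conjectured optimum uses the minimal prefix $1, 2$ followed by macroblocks $2^{m-1} 1^{m-1}$ (each with unit-level imbalance $(m-1)(1 - r^{m-1})$), whose weighted geometric sum $r(m-1)$ equals $q$ precisely when $r = 1/m$, confirming global balance.

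The main obstacle is ruling out alternative macroblock configurations. A macroblock of length $2b$ with $b$ copies of each symbol can be arranged in many ways, each yielding different pairs of contributions to $(u(1, \xi) - u(2, \xi), u(1, \xi) + u(2, \xi))$. The required lemma is a discrete rearrangement inequality: among all macroblock sequences satisfying the balance identity, the conjectured choice uniquely minimizes the common value $u(1, \xi) = u(2, \xi)$. Such an inequality would likely follow from the convexity of $L \mapsto r^L$ together with the observation that concentrating like symbols into single-type runs of length $m-1$ produces imbalance contributions at the most ``efficient'' levels; however, formalizing this across all block sizes $b$ and internal arrangements is the substantive technical challenge, which I expect to require either a careful induction on macroblock types or a potential-function argument that simultaneously bounds the common value and characterizes the equality case.
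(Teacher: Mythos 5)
This statement is a \emph{conjecture} in the paper: the authors offer only computational evidence for $m=3$ and the remark that $m=2$ follows from the optimality of $1,2,(2,1)$, so there is no proof of it to compare against. Your stage 1 is a correct verification of the closed form (it reproduces the paper's value $31/12$ for $m=3$), and your stage 2 exchange argument plausibly shows that any sequence with $u(1,\xi)\neq u(2,\xi)$ admits a strict improvement of the max; but even there you still need a compactness or semicontinuity argument to know that a best pure strategy exists at all, since the infimum of $\max\{u(1,\xi),u(2,\xi)\}$ over the uncountable set $[2]^\infty$ is not obviously attained and $u(i,\xi)=\infty$ is possible when a box is searched only finitely often.

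The decisive gap is stage 3, and you acknowledge it yourself: the ``discrete rearrangement inequality'' asserting that the block structure $2^{m-1}1^{m-1}$ minimizes the common value among all balanced sequences \emph{is} the content of the conjecture, and it is neither stated precisely nor proved. Two concrete problems: (i) your macroblock decomposition presupposes that every balanced sequence returns to diagonal states $(L,L)$ infinitely often, which does not follow from the balance condition $u(1,\xi)=u(2,\xi)$ --- balance constrains weighted sums of arrival times, not whether the counts $L_1$ and $L_2$ ever coincide --- so the decomposition itself is unjustified for the full class of competitors; (ii) even granting it, comparing arbitrary block sizes $b$ and internal arrangements against the specific choice $b=m-1$ is where all the difficulty lives, and the appeal to ``convexity of $L\mapsto r^L$'' is a heuristic, not an argument; note that Example~\ref{ex:best} already exhibits wildly different balanced sequences with distinct common values, so the minimization over balanced sequences is genuinely nontrivial. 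As written, the proposal verifies the formula and reduces the problem to balanced sequences, but does not prove optimality; the conjecture remains open.
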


\bigskip

We note that for $m=2$, Conjecture~\ref{co:best} is true because in that case the search sequence $1,2, (2,1)$ is optimal as seen in Example~\ref{ex:best}.

\section{Conclusion} \label{sec:conclusion}
This paper studies a discrete search game  and investigates the existence of optimal pure strategies for the Searcher---a single deterministic search sequence that achieves the optimal expected total search time regardless of where the Hider hides.
An optimal pure search strategy has significant practical value because it is straightforward to execute without the need of randomization.
It would also avoid the potential criticism from lay persons that the chosen course of action does not turn out well even though the chosen action is properly selected from a set of pure strategies that compose the optimal mixed strategy.

For a two-person zero-sum game, typically if one player's optimal mixed strategy includes several pure strategies, the other player's optimal mixed strategy also includes the same number of pure strategies.
In our search game, the Hider's optimal mixed strategy always instructs the Hider to hide in each of the $n$ boxes with a strictly positive probability, so it may come as a surprise that the Searcher has an optimal pure strategy in several nontrivial cases.
In particular, because the first few boxes in a search sequence have a profound effect on the conditional expected total search time for each box, intuitively it is necessary for the Searcher to randomize the first few boxes in order to achieve optimality.
However, if the detection probabilities $q_i$, $i=1,\ldots,n$, are sufficiently small, then the search tends to take a long time with high probability, so the effect of the first few searches becomes less significant, which makes it possible for the Searcher to prioritize the other boxes later on in a search sequence to still achieve optimality.
The several cases in which an optimal pure strategy exists for the Searcher we find in this paper all meet this general observation.

A natural question to ask in general is whether we can determine the best search sequence that minimizes the expected total search time regardless of where the Hider hides.
In addition, what is the gap between an optimal mixed search strategy and the best search sequence when the latter is not optimal?
In Section~\ref{sec:best-2boxes} we make a conjecture of the best search sequence for a very special case but it appears rather difficult to determine the best search sequence in general.

\section*{Acknowledgements} This material is based upon work supported by the National Science Foundation under Grant No. CMMI-1935826.

\end{document}